\newcommand{\etal}{{et~al.~\!}}
\newcommand{\arrg}{\mathcal{A}}
\definecolor{mycolor}{rgb}{0, 0, 0}
\title{Euclidean Maximum Matchings in the Plane---Local to Global\thanks{This research is supported by NSERC. A preliminary version of the paper (with slightly weaker bounds) has appeared in WADS 2021.}
}
\author{Ahmad Biniaz\thanks{School of Computer Science, University of Windsor, abiniaz@uwindsor.ca}
	\and Anil Maheshwari\thanks{School of Computer Science, Carleton University, \{anil, michiel\}@scs.carleton.ca}
	\and  Michiel Smid\footnotemark[2]
}
\date{}
\newtheorem{lemma}{Lemma}
\newtheorem{theorem}{Theorem}
\newtheorem{observation}{Observation}
\newtheorem*{problem*}{Problem}
\newtheorem*{invariant*}{Invariant}
\begin{document}
	\maketitle
	\begin{abstract}
		
Let $M$ be a perfect matching on a set of points in the plane where every edge is a line segment between two points. We say that $M$ is {\em globally maximum} if it is a maximum-length matching on all points. We say that $M$ is $k$-{\em local maximum} if for any subset $M'=\{a_1b_1,\dots,a_kb_k\}$ of $k$ edges of $M$ it holds that $M'$ is a maximum-length matching on points $\{a_1,b_1,\dots,a_k,b_k\}$. We show that local maximum matchings are good approximations of global ones.

Let $\mu_k$ be the infimum ratio of the length of any $k$-local maximum matching to the length of any global maximum matching, over all finite point sets in the Euclidean plane. It is known that $\mu_k\geqslant \frac{k-1}{k}$ for any $k\geqslant 2$.
We show the following improved bounds for $k\in\{2,3\}$: $\sqrt{3/7}\leqslant\mu_2< 0.93 $ and $\sqrt{3}/2\leqslant\mu_3< 0.98$. 
We also show that every pairwise crossing matching is unique and it is globally maximum. 

Towards our proof of the lower bound for $\mu_2$ we show the following result which is of independent interest: If we increase the radii of pairwise intersecting disks by factor $2/\sqrt{3}$, then the resulting disks have a common intersection.

	\end{abstract}
	
\section{Introduction}
{\color{mycolor}A {\em matching} in a graph is a set of edges without a common vertex. A {\em prefect matching} is a matching that covers every  vertex of the graph.}
A maximum-weight matching in an edge-weighted graph is a matching in which the sum of edge weights is maximized. Maximum-weight matching is among well-studied structures in graph theory and combinatorial optimization. It has been studied from both combinatorial and computational points of view in both abstract and geometric settings, see for example \cite{Alon1995,Avis1983,Bereg2019,Dyer1984,Duan2014,Edmonds1965,Edmonds1965b,Gabow1990,Gabow2018,Galil1986,Kuhn1955,Kuhn1956,Rendl1988}. Over the years, it has found applications in several areas such as scheduling, facility location, and network switching. It has also been used as a key subroutine in other optimization algorithms, for example, network flow algorithms \cite{Edmonds1972,Lawler1976}, maximum cut in planar graphs \cite{Hadlock1975}, and switch scheduling algorithms \cite{McKeown1996} to name a few. In the geometric setting, where vertices are represented by points in a Euclidean space and edges are line segments, the maximum-weight matching is usually referred to as the {\em maximum-length matching}. 

Let $P$ be a set of $2n$ distinct points in the plane, and let $M$ be a perfect matching on $P$ where every edge of $M$ is a straight line segment---{\color{mycolor}the underline graph is the complete geometric  graph with vertex set $P$}. We say that $M$ is {\em globally maximum} if it is a maximum-length matching on $P$.
For an integer $k\leqslant n$ we say that $M$ is $k$-{\em local maximum} if for any subset $M'=\{a_1b_1,\dots,a_kb_k\}$ of $k$ edges of $M$ it holds that $M'$ is a maximum-length matching on points $\{a_1,b_1,\allowbreak\dots,\allowbreak a_k,b_k\}$; in other words $M'$ is a maximum-length matching on the endpoints of its edges.
Local maximum matchings appear in local search heuristics for approximating global maximum matchings, see e.g. \cite{Arkin1998}.

It is obvious that any global maximum matching is locally maximum. On the other hand, local maximum matchings are known to be good approximations of global ones. 
Let $\mu_k$ be the infimum ratio of the length of any $k$-local maximum matching to the length of any global maximum matching, over all finite point sets in the Euclidean plane. For $k=1$, the ratio $\mu_1$ could be arbitrary small, because any matching is $1$-local maximum. For $k\geqslant 2$, however, it is known that $\mu_k\geqslant \frac{k-1}{k}$ (see e.g. \cite[Corollary 8]{Arkin1998}); this bound is independent of the Euclidean metric and it is valid for any edge-weighted complete graph. A similar bound is known for matroid intersection \cite[Corollary 3.1]{Lee2010}. We present improved bounds for $\mu_2$ and $\mu_3$; this is going to be the main topic of this paper.

\subsection{Our contributions}
The general lower bound $\frac{k-1}{k}$ implies that $\mu_2\geqslant 1/2$ and $\mu_3\geqslant 2/3$. {\color{mycolor}We use the geometry of the Euclidean plane and improve these bounds to $\mu_2\geqslant \sqrt{3/7}\approx~ 0.654$ and $\mu_3\geqslant \sqrt{3}/2\approx 0.866$. 
	For upper bounds, we exhibit point sets with 2- and 3-local maximum
	matchings for which $\mu_2<0.93$ and $\mu_3<0.98$.}
In the discussion at the end of this paper we show that analogous ratios for local minimum matchings could be arbitrary large.

For an edge set $E$, we denote by $w(E)$ the total length of its edges.
{\color{mycolor}First we prove a (weaker) lower bound of $1/\sqrt{2}$ for $\mu_3$. To obtain this bound} we prove that for any 3-local maximum matching $M$ it holds that $w(M)\geqslant w(M^*)/\sqrt{2}$ where $M^*$ is a global maximum matching for the endpoints of edges in $M$. To do so, we consider the set $D$ of diametral disks of edges in $M$. A recent result of Bereg \etal \cite{Bereg2019} combined with Helly's theorem \cite{Helly1923,Radon1921} implies that the disks in $D$ have a common intersection. We take a point in this intersection and connect it to endpoints of all edges of $M$ to obtain a star $S$. Then we show that $w(M^*)\leqslant w(S) \leqslant \sqrt{2}\cdot w(M)$, which proves the weaker lower bound. {\color{mycolor}To achieve the lower bound $\sqrt{3}/2$ we follow a similar approach but employ a recent result of Barabanshchikova and
	Polyanskii~\cite{Barabanshchikova2024} instead of \cite{Bereg2019}.
	
	Our proof approach for showing the lower bound $\sqrt{3/7}$ for $\mu_2$ is similar to that of  $1/\sqrt{2}$ for $\mu_3$.} However, our proof consists of more technical ingredients. We show that for any 2-local maximum matching $M$ it holds that $w(M)\geqslant \sqrt{3/7}\cdot w(M^*)$ where $M^*$ is a global maximum matching for the endpoints of edges of $M$. Again we consider the set $D$ of diametral disks of edges of $M$. A difficulty arises here because now the disks in $D$ may not have a common intersection, although they pairwise intersect. To overcome this issue we enlarge the disks in $D$ to obtain a new set of disks that have a common intersection. Then we take a point in this intersection and construct our star $S$ as before, and we show that $w(M^*)\leqslant w(S) \leqslant \sqrt{7/3}\cdot w(M)$. To obtain this result we face two technical complications: (i) we need to show that the enlarged disks have a common intersection, and (ii) we need to bound the distance from the center of star $S$ to endpoints of $M$. To overcome the first issue we prove that if we increase the radii of pairwise intersecting disks by factor $2/\sqrt{3}$ then the resulting disks have a common intersection; the factor $2/\sqrt{3}$ is the smallest that achieves this property. This result has the same flavor as the problem of stabbing pairwise intersecting disks with four points \cite{Carmi2018,Danzer1986,Har-Peled2018,Stacho1981}. To overcome the second issue we prove a result in distance geometry.

In a related result, which is also of independent interest, we show that every pairwise crossing matching is unique and it is globally maximum. To show the maximality we transform our problem into an instance of the ``multicommodity flows in planar graphs'' that was studied by Okamura and Seymour~\cite{Okamura1981} in 1981.

{\color{mycolor}The paper is organized as follow. In Section~\ref{k-section} we review the general lower bound on $\mu_k$. In Section~\ref{upper-bounds} we present upper bounds for $\mu_2$ and $\mu_3$. Sections~\ref{3-local-section} and~\ref{2-local-section} present better lower bounds for $\mu_3$ and $\mu_2$, respectively. Our result on pairwise-crossing matchings is given in Section~\ref{pairwise-crossing-section}.}

\subsection{Some related works}
From the computational point of view, Edmonds \cite{Edmonds1965,Edmonds1965b} gave a polynomial-time algorithm for computing weighted matchings in general graphs (the term {\em weighted matching} refers to both minimum-weight matching and maximum-weight matching). Edmonds' algorithm is a generalization of the Hungarian algorithm for weighted matching in bipartite graphs \cite{Kuhn1955,Kuhn1956}. There are several implementations of Edmonds' algorithm (see e.g. \cite{Gabow1990,Gabow1989,Galil1986,Lawler1976}) with the best known running time $O(mn+n^2\log n)$ \cite{Gabow1990,Gabow2018} where $n$ and $m$ are the number of vertices and edges of the graph. 
One might expect faster algorithms for the ``maximum-length matching'' in the geometric setting where vertices are points in the plane and any two points are connected by a straight line segment; we are not aware of any such algorithm. For general graphs, there is a linear-time $(1-\varepsilon)$-approximation of maximum-weight matching \cite{Duan2014}.

The analysis of maximum-length matching ratios has received attention in the past. In a survey by Avis \cite{Avis1983} it is shown that the matching obtained by a greedy algorithm (that picks the largest available edge) is a $1/2$-approximation of the global maximum matching (even in arbitrary weighted graphs). 
Alon, Rajagopalan, Suri \cite{Alon1995} studied non-crossing matchings, where edges are not allowed to cross each other. They showed that the ratio of the length of a maximum-length non-crossing matching to the length of a maximum-length matching is at least $2/\pi$; this ratio is the best possible. Similar ratios have been studied for non-crossing spanning trees, Hamiltonian paths and cycles \cite{Alon1995,Biniaz2019,Dumitrescu2010}. Bereg \etal \cite{Bereg2019} showed the following combinatorial property of maximum-length matchings: the diametral disks, introduced by edges of a maximum-length matching, have a common intersection. A somewhat similar property was proved by Huemer \etal \cite{Huemer2019} for bi-colored points.

\section{A lower bound for $k$-local maximum matchings}
\label{k-section}
For the sake of completeness, and to facilitate comparisons with our improved bounds, we repeat a proof of the general lower bound $\frac{k-1}{k}$, borrowed from \cite{Arkin1998}.

\begin{theorem}
	\label{k-local-thr}
	Every $k$-local maximum matching is a $\frac{k-1}{k}$-approximation of a global maximum matching for any $k\geqslant 2$.
\end{theorem}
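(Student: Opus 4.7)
The plan is a classical local-search / alternating-cycle argument. Let $M$ be a $k$-local maximum matching and $M^*$ a global maximum matching on the same $2n$-point set $P$ (the definition of $k$-local maximality already requires $n\geqslant k$).

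First I would form the symmetric difference $M\triangle M^*$. Since both matchings are perfect on $P$, the graph $M\triangle M^*$ is $2$-regular on its vertex set and therefore decomposes into vertex-disjoint even alternating cycles $C_1,C_2,\ldots$, along each of which $M$- and $M^*$-edges strictly alternate. It is enough to prove, for each such cycle $C$, the local inequality
\[
w(M\cap C)\;\geqslant\;\tfrac{k-1}{k}\,w(M^*\cap C),
\]
because summing over the cycles gives $w(M\setminus M^*)\geqslant\tfrac{k-1}{k}\,w(M^*\setminus M)$, and adding $w(M\cap M^*)$ to both sides yields $w(M)\geqslant\tfrac{k-1}{k}\,w(M^*)$.

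Fix a cycle $C$ with consecutive $M$-edges $e_1,\ldots,e_\ell$ and $M^*$-edges $e_1^*,\ldots,e_\ell^*$ (so $C$ has length $2\ell$). If $\ell\leqslant k$, I would augment $\{e_1,\ldots,e_\ell\}$ with any $k-\ell$ additional edges of $M$ that are vertex-disjoint from $C$, producing a $k$-subset $S\subseteq M$. By $k$-local maximality $S$ is a longest matching on its $2k$ endpoints; swapping $e_1,\ldots,e_\ell$ inside $S$ for $e_1^*,\ldots,e_\ell^*$ is legal because the two collections cover the same $2\ell$ vertices of $C$, and it produces a competitor on the same point set, yielding $w(M\cap C)\geqslant w(M^*\cap C)$, which is stronger than what is required.

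The main obstacle, and the step that pins the approximation factor exactly at $\tfrac{k-1}{k}$, is the case $\ell>k$. Here I would slide a window of size $k$ around $C$: for each $i\in\{1,\ldots,\ell\}$ (indices taken mod $\ell$) take the $k$ consecutive $M$-edges $e_i,e_{i+1},\ldots,e_{i+k-1}$, whose $2k$ endpoints form a sub-path of $C$. A competing perfect matching on those $2k$ endpoints is obtained from the $k-1$ ``interior'' $M^*$-edges $e_{i+1}^*,\ldots,e_{i+k-1}^*$ together with a single chord joining the two extreme endpoints of the window. Applying $k$-local maximality to the window,
\[
\sum_{j=0}^{k-1} w(e_{i+j})\;\geqslant\;\sum_{j=1}^{k-1} w(e_{i+j}^*)+(\text{chord length})\;\geqslant\;\sum_{j=1}^{k-1} w(e_{i+j}^*),
\]
where the chord is discarded using nonnegativity of Euclidean distance. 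Summing these $\ell$ inequalities cyclically counts each $M$-edge in exactly $k$ windows and each $M^*$-edge in exactly $k-1$, giving $k\cdot w(M\cap C)\geqslant(k-1)\cdot w(M^*\cap C)$. Combining the two cases and summing over all cycles of $M\triangle M^*$ completes the proof.
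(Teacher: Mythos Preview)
Your proof is correct and follows essentially the same alternating-cycle plus sliding-window argument as the paper: decompose into alternating cycles, and for long cycles compare each window of $k$ consecutive $M$-edges against the $k-1$ interior $M^*$-edges so that cyclic summation yields the $k$ versus $k-1$ count. You are simply a bit more explicit than the paper in two places---padding with extra $M$-edges when $\ell\leqslant k$, and naming the closing chord before discarding it via nonnegativity---but neither constitutes a different approach.
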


\begin{proof}
	Consider any $k$-local maximum matching $M$ and a corresponding global maximum matching $M^*$. The union of $M$ and $M^*$ consists of even cycles and/or single edges which belong to both matchings. It suffices to show, for each cycle~$C$, that the length of edges in $C\cap M$ is at least $\frac{k-1}{k}$ times that of edges in $C\cap M^*$.  
	
	Let $e_0,e_1,\dots,e_{|C|-1}$ be the edges of $C$ that appear in this order. Observe that $|C|\geqslant 4$, and that the edges of $C$ alternate between $M$ and $M^*$. Let $C_M$ and $C_{M^*}$ denote the sets of edges of $C$ that belong to $M$ and $M^*$, respectively. If $|C|\leqslant 2k$ then $w(C_M)=w(C_{M^*})$ because $M$ is $k$-local maximum, and thus we are done. Assume that $|C|\geqslant 2k+2$. After a suitable shifting of indices we may assume that $C_M=\{e_i: i \text{ is even}\}$ and $C_{M^*}\allowbreak=\allowbreak\{e_i: i \text{ is odd}\}$. Since $M$ is $k$-local maximum, for each even index $i$ we have 
	\begin{linenomath*}
		\begin{equation}
			\label{eq1}
			\notag
			w(e_{i})+w(e_{i+2})+\dots+w(e_{i+2k-2})\geqslant w(e_{i+1})+w(e_{i+3})+\dots+w(e_{i+2k-3})
		\end{equation}
	\end{linenomath*}
	where all indices are taken modulo $|C|$. By summing this inequality over all even indices, every edge of $C_M$ appears exactly $k$ times and every edge of $C_{M^*}$ appears exactly $k-1$ times, and thus we get $k\cdot w(C_M)\geqslant (k-1)\cdot w(C_{M^*})$. 
\end{proof}

It is implied from Theorem~\ref{k-local-thr} that $\mu_2\geqslant 1/2$ and $\mu_3\geqslant 2/3$. To establish stronger lower bounds, we need to incorporate more powerful ingredients. We use geometry of the Euclidean plane and improve both lower bounds. 

{\color{mycolor}
	\section{Upper bounds for 2- and 3-local maximum matchings}
	\label{upper-bounds}
	
	It is somewhat challenging to find point sets for which 2-local and 3-local maximum matchings are not globally maximum. Consider the point set $P$ with six points $\{A,B,C,D,E,F\}$ in Figure~\ref{upper-bound-fig}(a). The edge set $M_1=\{AB, CD, EF\}$ is a 2-local maximum matching for $P$. The length of $M_1$ is less than $17.76$. The edge set $M_2=\{AF,BC,DE\}$ is another matching for $P$, and its length is larger than $19.1$. This implies that $\mu_2< 17.76/19.1 < 0.93$. In other words a 2-local maximum matching may not approximate a global one better than ratio $0.93$. 
	
	Now consider the point set $Q$ with eight points $\{A,B,C,D,E,F, G,H\}$ in Figure~\ref{upper-bound-fig}(b). The edge set $M_1=\{AB, CD, EF, GH\}$ is a 3-local maximum matching for $Q$ and its length is less than $13.235$. The edge set $M_2=\{AH,BC,\allowbreak DE,FG\}$ is another matching for $Q$, and its length is larger than $13.515$. This implies that $\mu_2< 13.235/13.515 < 0.98$.

	\begin{figure}[htb]
		\centering
		\setlength{\tabcolsep}{0in}
		$\begin{tabular}{cc}
			\multicolumn{1}{m{.4\columnwidth}}{\centering\includegraphics[width=.4\columnwidth]{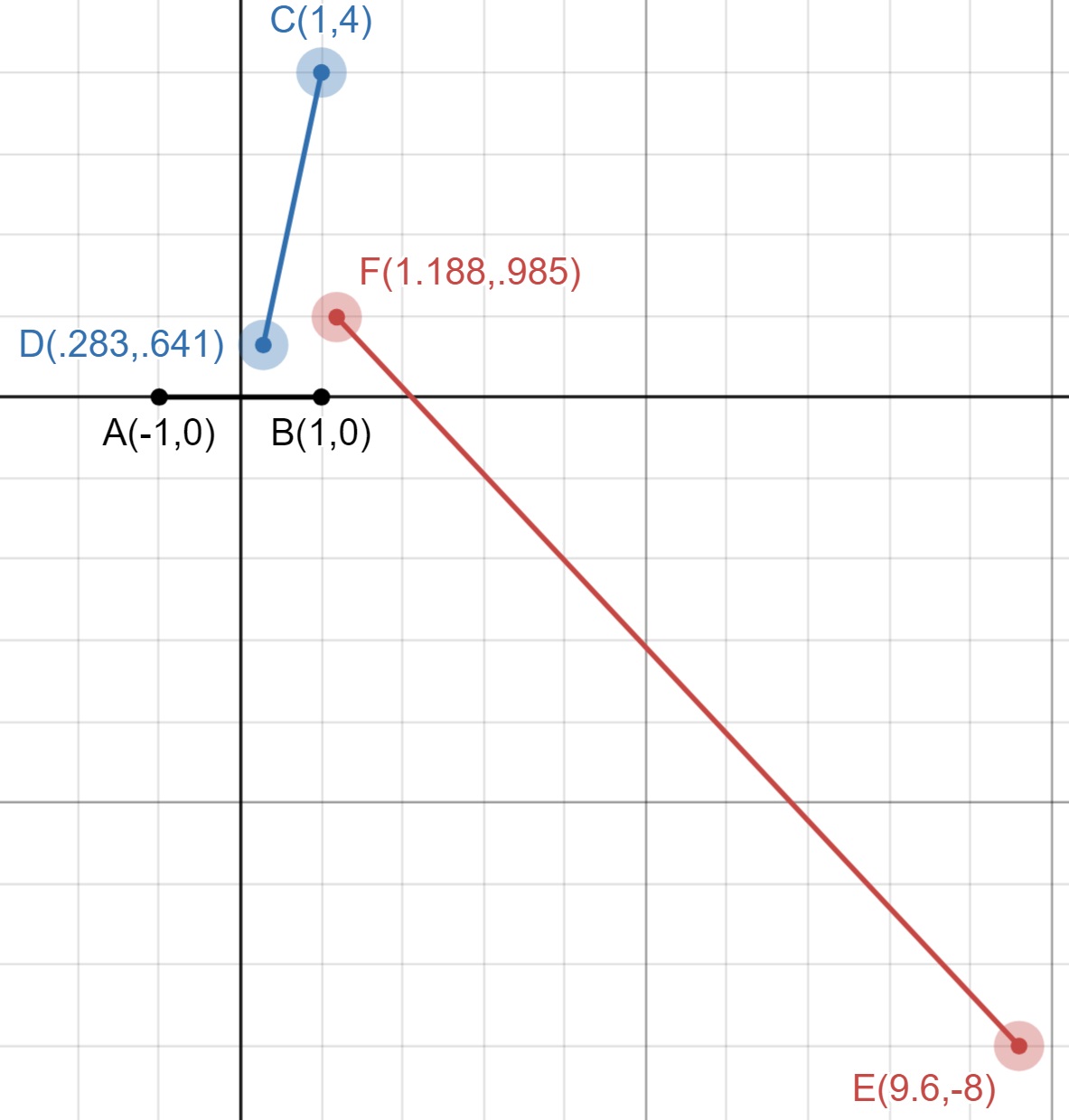}}
			&\multicolumn{1}{m{.6\columnwidth}}{\centering\includegraphics[width=.57\columnwidth]{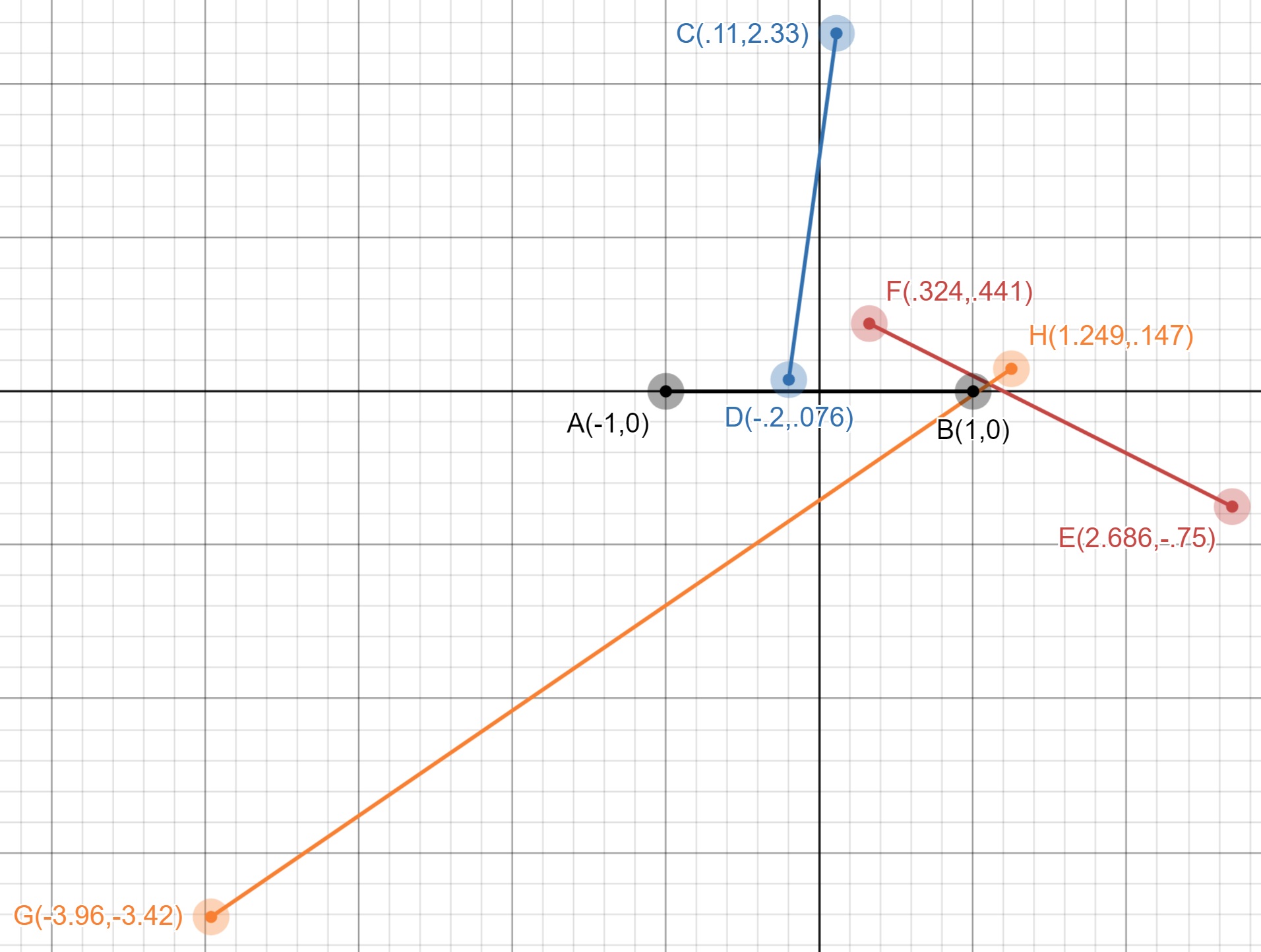}}\\
			(a)&(b)
		\end{tabular}$
		\caption{Illustration of upper bounds for (a) 2-local matchings and (b) 3-local matchings.}
		\label{upper-bound-fig}
	\end{figure}
}

\section{Better lower bound for 3-local maximum matchings}
\label{3-local-section}
{\color{mycolor}We give a proof of the lower bound $1/\sqrt{2}$ for 3-local maximum matchings first because it is easier to understand. Also it serves as a preliminary for our proof of the lower bound on $\mu_2$ which is given in Section~\ref{2-local-section}.}
Our proof benefits from the following result of Bereg \etal \cite{Bereg2019} and Helly's theorem \cite{Helly1923,Radon1921}.

\begin{theorem}[Bereg \etal \cite{Bereg2019}]
	\label{Bereg-thr}
	Consider any maximum matching of any set of six points in the plane. The diametral disks of the three edges in this matching have a nonempty intersection. 
\end{theorem}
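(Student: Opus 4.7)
My plan is to first establish pairwise intersection of the three diametral disks using 2-local maximality of $M$, and then upgrade to a common intersection using the full 3-local (global) property. Let $D_i$ denote the diametral disk of edge $a_ib_i$, with center $m_i=(a_i+b_i)/2$ and radius $r_i=|a_ib_i|/2$.

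For pairwise intersection, the swap replacing $\{a_ib_i,a_jb_j\}$ by $\{a_ia_j,b_ib_j\}$ cannot increase the total length, so $|a_ia_j|+|b_ib_j|\leqslant |a_ib_i|+|a_jb_j|$; combined with the triangle inequality this gives
\[
|m_i-m_j|=\tfrac{1}{2}\bigl|(a_i-a_j)+(b_i-b_j)\bigr|\leqslant \tfrac{1}{2}\bigl(|a_ia_j|+|b_ib_j|\bigr)\leqslant r_i+r_j,
\]
hence $D_i\cap D_j\neq\emptyset$. The main obstacle is that Helly's theorem in the plane is vacuous for only three convex sets, so the 3-local property must be used in an essential way to obtain the common intersection.

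For this I would argue by contradiction. Assume $D_1\cap D_2\cap D_3=\emptyset$. Since $D_1\cap D_2$ is a nonempty convex set disjoint from $D_3$, there is a unique closest point $q\in D_1\cap D_2$ to the center $m_3$, with $|q-m_3|>r_3$. A standard optimality argument shows that in the generic case $q$ lies on $\partial D_1\cap \partial D_2$; then one has the right-angle relations $(q-a_1)\cdot(q-b_1)=0$ and $(q-a_2)\cdot(q-b_2)=0$, together with the strict acute-angle condition $(q-a_3)\cdot(q-b_3)>0$.

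From here, I would extract a contradiction using the global maximality of $M$ against a cleverly chosen alternative perfect matching on $\{a_1,b_1,a_2,b_2,a_3,b_3\}$. The right-angle conditions yield the Pythagorean identities $|a_iq|^2+|b_iq|^2=|a_ib_i|^2$ for $i\in\{1,2\}$, while the acute-angle condition gives $|a_3q|^2+|b_3q|^2>|a_3b_3|^2$ via the law of cosines. The plan is to combine these exact relations with the triangle inequality applied at the pivot $q$ and the 2- and 3-swap inequalities implied by maximality, so as to derive $w(M')>w(M)$ for a suitable alternative matching $M'$. The hard part, I expect, is selecting this alternative matching: the right choice likely depends on the cyclic order of the six endpoints around $q$, yielding a short case analysis. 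Degenerate sub-cases in which $q$ lies on the relative interior of only one of $\partial D_1$, $\partial D_2$ should be handled by analogous but simpler arguments, or reduced to the generic case by a perturbation argument.
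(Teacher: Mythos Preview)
The paper does not prove this statement at all: Theorem~\ref{Bereg-thr} is quoted from Bereg \etal\cite{Bereg2019} and used as a black box in the proof of Theorem~\ref{local3-thr}. There is therefore no in-paper argument to compare against; your proposal must stand or fall on its own.

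Your pairwise-intersection step is correct and clean. The remainder, however, is only an outline. Two concrete issues:

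\emph{The ``generic case'' is mis-identified.} The closest point $q$ of the lens $D_1\cap D_2$ to $m_3$ need not lie on $\partial D_1\cap\partial D_2$; when, say, the nearest boundary point of $D_1$ to $m_3$ already lies in the interior of $D_2$, the minimizer sits on $\partial D_1$ only. This is a main case, not a degeneracy, and there you get $|a_2q|^2+|b_2q|^2<|a_2b_2|^2$, the opposite inequality to the one you want for index~$2$.

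\emph{The key step is not carried out.} Even when $q\in\partial D_1\cap\partial D_2$, the relations you record are quadratic ($|a_iq|^2+|b_iq|^2=|a_ib_i|^2$), while the matching objective is linear in the edge lengths. The triangle inequality at $q$ gives only upper bounds $|xy|\leqslant|xq|+|yq|$ on alternative edges, which is the wrong direction for producing a matching $M'$ with $w(M')>w(M)$. Turning these ingredients into a contradiction is exactly the substantive content of the Bereg \etal result, and your proposal stops short of it (``a cleverly chosen alternative matching \ldots\ a short case analysis''). As written, this is a plausible plan of attack, not a proof.
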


\begin{theorem}[Helly's theorem in $\mathcal{R}^2$]
	\label{Helly-thr}
	If in a family of convex sets in the plane every triple of sets has a nonempty intersection, then the entire family has a nonempty intersection. 
\end{theorem}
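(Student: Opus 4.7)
The plan is to prove Helly's theorem in the plane by induction on the number $n$ of sets in the family, with the crucial case $n=4$ handled via Radon's theorem. The cases $n\leq 3$ are tautological since the hypothesis already guarantees a common point of every triple.

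For $n=4$, given convex sets $C_1,C_2,C_3,C_4$ in $\mathcal{R}^2$ in which every three have a common point, I would select for each $i\in\{1,2,3,4\}$ a point $p_i\in\bigcap_{j\neq i}C_j$. By Radon's theorem in $\mathcal{R}^2$, the four points $p_1,p_2,p_3,p_4$ admit a partition into two disjoint subsets $A$ and $B$ whose convex hulls meet. Pick any $q\in\mathrm{conv}(A)\cap\mathrm{conv}(B)$. Now for each index $m\in\{1,2,3,4\}$, exactly one of $A,B$ omits $p_m$; the other contains only points $p_i$ with $i\neq m$, and each such $p_i$ lies in $C_m$. Hence $C_m$ contains that entire subset, and by convexity contains its convex hull, and therefore contains $q$. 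Thus $q\in C_1\cap C_2\cap C_3\cap C_4$.

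For the inductive step $n\geq 5$, I would replace the last two sets by their intersection and apply induction to the family $\{C_1,\dots,C_{n-2},C_{n-1}\cap C_n\}$ of $n-1$ convex sets. The only thing to check is that every triple in this new family has a common point: triples drawn entirely from the original $C_1,\dots,C_{n-2}$ are handled by hypothesis, while a triple of the form $\{C_i,C_j,C_{n-1}\cap C_n\}$ demands a point in $C_i\cap C_j\cap C_{n-1}\cap C_n$, which is precisely the $n=4$ case applied to $C_i,C_j,C_{n-1},C_n$ (whose triples meet because they are triples of the original family). Induction then yields a common point of the reduced family, and this point automatically lies in $\bigcap_{i=1}^{n}C_i$.

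The main obstacle is the $n=4$ case: Radon's theorem delivers the partition, but one must be careful to verify that $q$ lies in every $C_m$ by pairing each $m$ with the Radon block not containing $p_m$ and invoking convexity. A secondary technicality is that the theorem as stated implicitly assumes finitely many sets; for infinite families one would additionally need compactness (e.g.\ closed and bounded convex sets), which is automatic in the intended application to the finitely many diametral disks associated with the edges of a matching.
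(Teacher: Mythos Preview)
The paper does not prove Helly's theorem at all; it merely states it as a classical result with citations to Helly~\cite{Helly1923} and Radon~\cite{Radon1921} and then invokes it as a black box. Your argument---Radon's theorem for the base case $n=4$ followed by the standard induction that merges two sets into one---is the textbook proof and is correct. One small wording slip: in the $n=4$ step you write ``exactly one of $A,B$ omits $p_m$; the other contains only points $p_i$ with $i\neq m$,'' but you mean the block that \emph{omits} $p_m$, not the other one; the rest of the sentence makes your intent clear. Your remark that the statement as written tacitly assumes a finite family (or else requires compactness) is correct and matches the paper's actual use, where the family consists of finitely many diametral disks.
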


\begin{figure}[ht]
	\centering
	\includegraphics[width=1.85in]{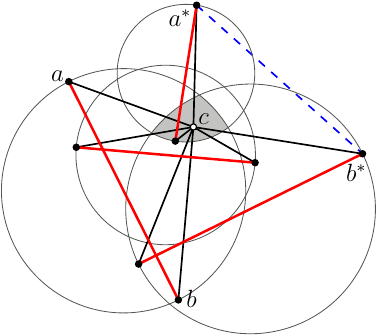} 
	\vspace{-10pt}
	\caption{Red edges belong to $M$, black edges belong to $S$, and  blue edge belongs to $M^*$.}
	\label{local-fig}
\end{figure}

\begin{theorem}
	\label{local3-thr}
	Every 3-local Euclidean maximum matching is a $\frac{1}{\sqrt{2}}$-approximation of a global Euclidean maximum matching.
\end{theorem}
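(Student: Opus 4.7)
The plan is to combine Theorem~\ref{Bereg-thr} with Helly's theorem to produce a single point $p$ lying in every diametral disk of $M$, and then compare $M$ and $M^*$ through the star centered at $p$.

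First I would fix a 3-local maximum matching $M=\{a_1b_1,\dots,a_nb_n\}$ and a global maximum matching $M^*$ on the same point set. For each $i$, let $D_i$ be the closed diametral disk of the edge $a_ib_i$. The crucial observation is that for any three indices $i,j,\ell$, the three edges $a_ib_i, a_jb_j, a_\ell b_\ell$ form a maximum matching on their six endpoints (by the 3-local maximum property of $M$), so Theorem~\ref{Bereg-thr} guarantees $D_i\cap D_j\cap D_\ell\neq\emptyset$. Since the $D_i$ are convex, Helly's theorem in $\mathcal{R}^2$ yields a common point $p\in\bigcap_{i=1}^n D_i$.

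Next I would build the star $S$ with center $p$ and leaves at every endpoint of $M$, i.e.\ $S=\{pa_i,pb_i:1\leqslant i\leqslant n\}$. The proof then reduces to two inequalities. For the upper bound $w(S)\leqslant\sqrt{2}\,w(M)$: since $p\in D_i$, the angle $\angle a_ipb_i$ is at least $\pi/2$, so by the law of cosines $|pa_i|^2+|pb_i|^2\leqslant |a_ib_i|^2$. Combining with the Cauchy--Schwarz (or QM--AM) inequality $|pa_i|+|pb_i|\leqslant\sqrt{2}\sqrt{|pa_i|^2+|pb_i|^2}$ gives $|pa_i|+|pb_i|\leqslant\sqrt{2}\,|a_ib_i|$, and summing over $i$ yields $w(S)\leqslant\sqrt{2}\,w(M)$. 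For the lower bound $w(M^*)\leqslant w(S)$: every edge $uv$ of $M^*$ joins two endpoints of $M$, both of which are leaves of $S$, and by the triangle inequality $|uv|\leqslant |pu|+|pv|$. Since $M^*$ is a matching, the multiset of leaves used on the right-hand side is a sub-multiset of the star's leaves, so summing over $M^*$ gives $w(M^*)\leqslant w(S)$.

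Chaining the two bounds yields $w(M^*)\leqslant\sqrt{2}\,w(M)$, i.e.\ $w(M)\geqslant w(M^*)/\sqrt{2}$, which is the claim. I do not anticipate a serious obstacle: Theorem~\ref{Bereg-thr} does the heavy combinatorial lifting, and the remaining work is two short applications of elementary plane geometry (the obtuse-angle characterization of the diametral disk and the triangle inequality) plus the quadratic mean versus arithmetic mean bound, all of which go through without complications.
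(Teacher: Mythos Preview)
Your proposal is correct and follows essentially the same route as the paper: Bereg et al.\ plus Helly to obtain a common point, then the star comparison via $w(M^*)\leqslant w(S)\leqslant\sqrt{2}\,w(M)$. The only cosmetic difference is that the paper derives $|pa_i|+|pb_i|\leqslant\sqrt{2}\,|a_ib_i|$ by invoking its general Lemma~\ref{endpoint-lemma} with $r=1$, whereas you obtain it directly from the obtuse-angle characterization of the diametral disk together with QM--AM; the two arguments are equivalent here.
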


\begin{proof}
	Consider any 3-local maximum matching $M$. Let $M^*$ be a global maximum matching for the endpoints of edges of $M$. Consider the set $D$ of diametral disks introduced by edges of $M$. Since $M$ is 3-local maximum, any three disks in $D$ have a common intersection (by Theorem~\ref{Bereg-thr}). With this property, it is implied by Theorem~\ref{Helly-thr} that the disks in $D$ have a common intersection (the shaded region in Figure~\ref{local-fig}). Let $c$ be a point in this intersection. Let $S$ be the star obtained by connecting $c$ to all endpoints of edges of $M$ as in Figure~\ref{local-fig}. Since $c$ is in the diametral disk of every edge $ab\in M$, it is at distance at most $|ab|/2$ from the midpoint of $ab$. By applying Lemma~\ref{endpoint-lemma} (which will be proved in Section~\ref{2-local-section}), with $c$ playing the role of $p$ and $r=1$, we have 
	\begin{linenomath*}	
		\begin{equation}
			\label{eq2}
			|ca|+|cb|\leqslant \sqrt{2} \cdot |ab|.
		\end{equation} 	
	\end{linenomath*}
	In Inequality~\eqref{eq2}, for every edge $ab \in M$, a unique pair of edges in $S$ is charged to $ab$. Therefore, $w(S) \leqslant \sqrt{2} \cdot w(M)$.  
	Now consider any edge $a^*b^*\in M^*$. By the triangle inequality we have that
	\begin{linenomath*}	 
		\begin{equation}
			\label{eq3}
			|a^*b^*|\leqslant |ca^*|+|cb^*|.
		\end{equation}
	\end{linenomath*}	
	In Inequality~\eqref{eq3}, every edge of $M^*$ is charged to a unique pair of edges in $S$. Therefore, $w(M^*) \leqslant w(S)$. Combining the two resulting inequalities we have that $w(M)\geqslant w(M^*)/\sqrt{2}$.
\end{proof}

{\color{mycolor}
	\vspace{8pt}
	\noindent{\bf A better lower bound.}
	In 1995, Fingerhut \cite{Fingerhut} conjectured that for any maximum-length matching $\{(a_1,b_1), \allowbreak \dots, \allowbreak(a_n,b_n)\}$ on any set of $2n$ points in the plane there exists a point $c$ such that 
	\begin{linenomath*}
		\begin{equation}
			\label{eq4}
			|a_i c| + |b_i c| \leqslant \alpha \cdot |a_i b_i| 
		\end{equation}
	\end{linenomath*}
	for all $i \in \{1,\ldots,n\}$, where $\alpha = 2/\sqrt{3}$. This conjecture is recently proved by Barabanshchikova and
	Polyanskii~\cite{Barabanshchikova2024}. An alternative interpretation of this result is that the ellipses with foci at $a_i$ and $b_i$ 
	and with eccentricity $\sqrt{3}/2$ have a nonempty intersection. This result combined with an argument similar to our proof of Theorem~\ref{local3-thr}, imply the approximation ratio $\frac{\sqrt{3}}{2}$ for 3-local maximum matchings. The main result of this section is summarized in the following theorem, which implies that $\mu_3\geqslant \sqrt{3}/2$.
	
	\begin{theorem}
		\label{local3-thr-revisited}
		Every 3-local Euclidean maximum matching is a $\frac{\sqrt{3}}{2}$-approximation of a global Euclidean maximum matching.
	\end{theorem}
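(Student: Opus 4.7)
The plan is to mirror the proof of Theorem~\ref{local3-thr} exactly, only replacing the family of diametral disks with a family of confocal ellipses, and replacing Theorem~\ref{Bereg-thr} with the Barabanshchikova--Polyanskii theorem (Fingerhut's conjecture). The gain from $1/\sqrt{2}$ to $\sqrt{3}/2$ comes solely from the tighter Fingerhut constant $\alpha = 2/\sqrt{3}$ in place of the trivial constant $\sqrt{2}$ given by the diametral disk.

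Let $M$ be a 3-local maximum matching and let $M^*$ be a global maximum matching on the same point set. For every edge $a_ib_i \in M$ let $E_i$ be the closed ellipse with foci $a_i, b_i$ and eccentricity $\sqrt{3}/2$; equivalently, $E_i = \{p : |pa_i| + |pb_i| \leqslant (2/\sqrt{3}) |a_ib_i|\}$. First I would observe that, since $M$ is 3-local maximum, any three edges of $M$ form a maximum-length matching on their six endpoints. Applying Barabanshchikova and Polyanskii's theorem to these three edges as in \eqref{eq4} yields a point $c$ with $|a_i c|+|b_i c|\leqslant (2/\sqrt{3})|a_ib_i|$ for each of the three edges, i.e.\ a point in the common intersection of the three corresponding ellipses $E_i$. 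Hence every triple of ellipses in the family $\{E_i\}$ has a nonempty intersection.

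Next I would invoke Helly's theorem (Theorem~\ref{Helly-thr}): the $E_i$ are convex planar sets, and every triple has common intersection, so the entire family has a common point $c$. Form the star $S$ by connecting $c$ to every endpoint of $M$, exactly as in Figure~\ref{local-fig}. For each edge $a_ib_i \in M$, membership of $c$ in $E_i$ gives $|ca_i|+|cb_i|\leqslant (2/\sqrt{3})|a_ib_i|$; summing over $i$ yields $w(S) \leqslant (2/\sqrt{3})\, w(M)$. For each edge $a^*b^*\in M^*$, the triangle inequality gives $|a^*b^*|\leqslant |ca^*|+|cb^*|$, and since $M^*$ is a perfect matching on the endpoints of $M$, summing over $M^*$ yields $w(M^*)\leqslant w(S)$. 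Combining the two bounds gives $w(M^*)\leqslant (2/\sqrt{3})\, w(M)$, i.e.\ $w(M)\geqslant (\sqrt{3}/2)\, w(M^*)$, as required.

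The only real obstacle would have been verifying the Helly hypothesis for the ellipse family, but this is exactly the role of 3-locality of $M$: it reduces the triple-intersection hypothesis to an application of Fingerhut's conjecture on six points, which is precisely the theorem of Barabanshchikova and Polyanskii. No analogue of Lemma~\ref{endpoint-lemma} is needed here, because the ellipse directly encodes the bound $|ca|+|cb|\leqslant (2/\sqrt{3})|ab|$ at every point of $E_i$, so the star inequality is immediate.
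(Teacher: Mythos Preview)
Your proposal is correct and matches the paper's approach exactly: the paper states Theorem~\ref{local3-thr-revisited} immediately after noting that Barabanshchikova and Polyanskii's resolution of Fingerhut's conjecture, ``combined with an argument similar to our proof of Theorem~\ref{local3-thr},'' yields the $\sqrt{3}/2$ bound. Your write-up spells out that argument in full---ellipses in place of diametral disks, the six-point case of Fingerhut's conjecture in place of Theorem~\ref{Bereg-thr}, Helly for the triple intersections, and the same star inequality---and your observation that Lemma~\ref{endpoint-lemma} is no longer needed is also correct.
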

}

\section{Better lower bound for 2-local maximum matchings}
\label{2-local-section}
In this section we prove that $\mu_2\geqslant \sqrt{3/7}\approx 0.65$, that is, 2-local maximum matchings are $\sqrt{3/7}$ approximations of global ones. Our proof approach employs an argument similar to the proof of the lower bound $1/\sqrt{2}$ for 3-local maximum matchings. Here we are facing an obstacle because diametral disks that are introduced by edges of a 2-local maximum matching may not have a common intersection. To handle this issue, we require stronger tools. Our idea is to increase the radii of disks---while preserving their centers---to obtain a new set of disks that have a common intersection. Then we apply our argument on this new set of disks. This gives rise to somewhat lengthier analysis. Also, two technical complications arise because now we need to show that the new disks have a common intersection, and we need to bound the total distance from any point in new disks to the endpoints of the corresponding matching edges. The following lemmas play important roles in our proof.
\begin{figure}[htb]
	
	\centering
	\includegraphics[width=.56\columnwidth]{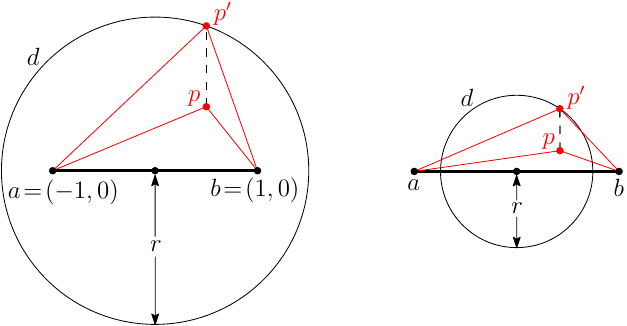}
	\caption{Illustration of the proof of Lemma~\ref{endpoint-lemma}.}
	\label{endpoint-fig}
\end{figure}
\begin{lemma} 
	\label{endpoint-lemma}
	Let $r>0$ be a real number. If $ab$ is a line segment in the plane and $p$ is a point at distance at most $\frac{r\cdot |ab|}{2}$ from the midpoint of $ab$ then 
	\begin{linenomath*}$$|pa|+|pb|\leqslant \sqrt{r^2+1}\cdot |ab|.$$
	\end{linenomath*}
\end{lemma}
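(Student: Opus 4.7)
The plan is to reduce the statement to two clean inequalities: an Apollonius-type identity that controls $|pa|^2+|pb|^2$ in terms of the distance from $p$ to the midpoint, and a standard quadratic-mean vs.\ arithmetic-mean inequality that converts this control into one on $|pa|+|pb|$.

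First I would introduce the midpoint $m$ of $ab$ and set $L=|ab|$. The median-length identity (a direct consequence of the parallelogram law, obtainable by placing $a=(-L/2,0)$, $b=(L/2,0)$, $p=(x,y)$ and expanding) gives
\begin{linenomath*}
$$|pa|^2+|pb|^2 \;=\; 2|pm|^2+\tfrac{1}{2}|ab|^2.$$
\end{linenomath*}
Using the hypothesis $|pm|\leqslant r L/2$, this immediately yields
\begin{linenomath*}
$$|pa|^2+|pb|^2 \;\leqslant\; 2\cdot\tfrac{r^2 L^2}{4}+\tfrac{L^2}{2} \;=\; \tfrac{r^2+1}{2}\cdot|ab|^2.$$
\end{linenomath*}

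Next I would use the elementary inequality $(u+v)^2\leqslant 2(u^2+v^2)$ (which is just the Cauchy--Schwarz inequality applied to the vectors $(1,1)$ and $(u,v)$, or equivalently the QM--AM inequality) with $u=|pa|$, $v=|pb|$. Combined with the bound above, this gives
\begin{linenomath*}
$$(|pa|+|pb|)^2 \;\leqslant\; 2\big(|pa|^2+|pb|^2\big) \;\leqslant\; (r^2+1)\cdot|ab|^2,$$
\end{linenomath*}
and taking square roots finishes the proof. The argument is completely elementary; there is no real obstacle, since both ingredients (Apollonius' median identity and the QM--AM inequality) are tight in exactly compatible ways. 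Equality actually holds when $p$ lies on the perpendicular bisector of $ab$ at distance exactly $rL/2$ from $m$, which in hindsight is the configuration the figure in the paper is drawn to highlight.
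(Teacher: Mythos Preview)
Your proof is correct and is genuinely different from the paper's. The paper normalizes to $|ab|=2$, reduces to the case where $p$ lies on the boundary circle of radius $r$ by a vertical projection argument, writes $|pa|+|pb|$ explicitly as $f(x)=\sqrt{r^2+1+2x}+\sqrt{r^2+1-2x}$, and then uses a derivative computation to see that $f$ is decreasing on $[0,r]$, so the maximum $2\sqrt{r^2+1}$ is attained at $x=0$. Your argument replaces all of this with two purely algebraic lines: Apollonius' median identity to bound $|pa|^2+|pb|^2$, and the QM--AM inequality $(u+v)^2\leqslant 2(u^2+v^2)$ to pass to $|pa|+|pb|$. This is shorter, requires no calculus or boundary reduction, and the two inequalities are simultaneously tight exactly when $|pm|=rL/2$ and $|pa|=|pb|$, which recovers the same extremal configuration the paper finds. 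The only thing the paper's approach adds is the monotonicity of $f$ in $x$, which is not needed for the lemma as stated.
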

\begin{proof}
	After scaling by factor $2/|ab|$ we will have $|ab|=2$ and $p$ at distance at most $r$ from the midpoint of $ab$. After a suitable rotation and translation assume that $a=(-1,0)$ and $b=(1,0)$. Any point $p=(x,y)$ at distance at most $r$ from the midpoint of $ab$ lies in the disk $d$ of radius $r$ that is centered at $(0,0)$ as in Figure~\ref{endpoint-fig}. Since $|ab|=2$, it suffices to prove that $|pa|+|pb|\leqslant 2\sqrt{r^2+1}$. Without loss of generality we may assume that $x\geqslant 0$ and $y\geqslant 0$. Let $p'$ be the vertical projection of $p$ onto the boundary of $d$ as in Figure~\ref{endpoint-fig}. Observe that $|pa|\leqslant |p'a|$ and $|pb|\leqslant |p'b|$. Thus the largest value of $|pa|+|pb|$ occurs when $p$ is on the boundary of $d$. Therefore, for the purpose of this lemma we assume that $p$ is on the boundary circle of $d$. 
	The circle has equation $x^2 + y^2 = r^2$. 
	Therefore, we can define $|pa|+|pb|$ as a function of $x$ as follows where $0\leqslant x\leqslant r$ (recall that $x$ is the $x$-coordinate of $p$, and $y$ is the $y$-coordinate of $p$).
	\begin{linenomath*}
		\begin{align}
			f(x)&=|pa|+|pb|\notag = \sqrt{(x+1)^2+y^2}+\sqrt{(x-1)^2+y^2} \notag \\ 
			&=\sqrt{x^2+y^2+1+2x}+\sqrt{x^2+y^2+1-2x}\notag \\ \notag
			&=\sqrt{r^2+1+2x}+\sqrt{r^2+1-2x}.\end{align}\end{linenomath*}
	We are interested in the largest value of $f(x)$ on interval  $x\in[0,r]$. By computing its derivative it turns out that $f(x)$ is decreasing on this interval. Thus the largest value of $f(x)$ is achieved at $x=0$, and it is $2\sqrt{r^2+1}$. 
\end{proof}

\begin{figure}[htb]
	\centering
	\includegraphics[width=1.0\columnwidth]{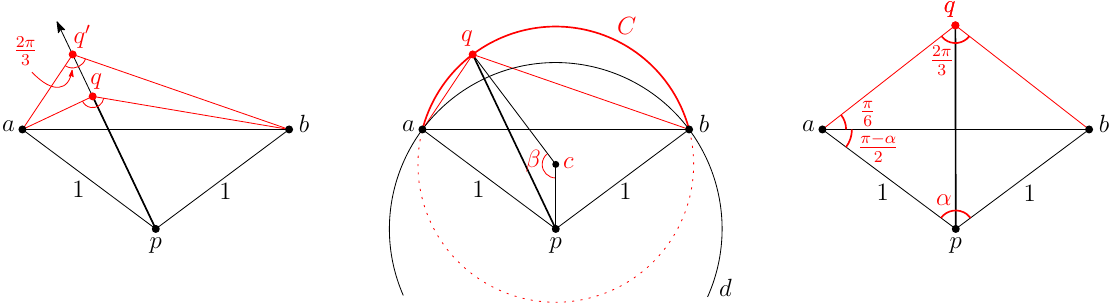}
	\caption{Illustration of the proof of Lemma~\ref{diameter-lemma}.}
	\label{diameter-fig}
\end{figure}

\begin{lemma} 
	\label{diameter-lemma}
	Let $a, p, b, q$ be the vertices of a convex quadrilateral that appear in this order along the boundary. If $|pa|=|pb|$ and $\angle aqb \geqslant 2\pi/3$ then $|pq|\leqslant \frac{2}{\sqrt{3}}|pa|$.
\end{lemma}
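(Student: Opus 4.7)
The plan is to reduce the inequality to a single-variable comparison by placing the isosceles triangle $apb$ in a standard coordinate system and confining $q$ to an explicit region. Scale so that $|ab|=2$ and choose coordinates with $a=(-1,0)$ and $b=(1,0)$; then $|pa|=|pb|$ puts $p$ on the $y$-axis, and after a reflection if needed I write $p=(0,h)$ with $h>0$. Convexity of the quadrilateral in the cyclic order $a,p,b,q$ forces $q$ to lie strictly on the opposite side of the line $ab$ from $p$, so $q=(x,y)$ with $y<0$.

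Next I would convert the angular condition into an explicit region for $q$. By the inscribed angle theorem, the locus of points below $ab$ that see $ab$ under angle exactly $2\pi/3$ is an arc of the circle of radius $2/\sqrt{3}$ centered at $(0,1/\sqrt{3})$ (the circumcircle of the equilateral triangle erected on $ab$, consistent with the Fermat-point flavor suggested by the figure). Consequently the hypothesis $\angle aqb\geqslant 2\pi/3$ is equivalent to $q$ lying in the closed circular segment bounded by the chord $ab$ and this arc, whose lowest point is $(0,-1/\sqrt{3})$.

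The lemma now reduces to maximizing $|pq|$ over this compact segment and comparing with $\tfrac{2}{\sqrt{3}}|pa|=\tfrac{2}{\sqrt{3}}\sqrt{1+h^2}$. On the chord part of the boundary ($y=0$, $|x|\leqslant 1$), $|pq|^2=x^2+h^2\leqslant 1+h^2=|pa|^2$. On the arc, substituting $x^2=\tfrac{4}{3}-(y-\tfrac{1}{\sqrt{3}})^2$ from the circle equation gives
\[
|pq|^2 \;=\; \tfrac{4}{3} + \bigl(\tfrac{1}{\sqrt{3}}-h\bigr)\bigl(2y - h - \tfrac{1}{\sqrt{3}}\bigr),
\]
which is linear, hence monotone, in $y$. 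So the extremum on the arc is attained either at one of its endpoints ($q=a$ or $q=b$, giving $|pq|=|pa|$) or at its lowest point $q=(0,-1/\sqrt{3})$ (giving $|pq|=h+\tfrac{1}{\sqrt{3}}$). Combining, $|pq|\leqslant \max\{|pa|,\ h + 1/\sqrt{3}\}$.

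It remains to check both candidates against $\tfrac{2}{\sqrt{3}}|pa|$. The first is immediate since $|pa|\leqslant \tfrac{2}{\sqrt{3}}|pa|$; squaring $h+\tfrac{1}{\sqrt{3}}\leqslant \tfrac{2}{\sqrt{3}}\sqrt{1+h^2}$ and clearing denominators collapses it to $(h-\sqrt{3})^2\geqslant 0$, with equality precisely when $h=\sqrt{3}$ and $q=(0,-1/\sqrt{3})$; in that extremal configuration $a,p,b,q$ are concyclic on the circle above. The only non-routine step of the whole argument is showing that the worst-case $q$ lies on the perpendicular bisector of $ab$, and the linearity in $y$ displayed above is what handles this cleanly, so no real calculus is needed beyond the final one-line algebraic check.
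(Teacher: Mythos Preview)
Your argument is correct and takes a genuinely different route from the paper's. The paper normalizes $|pa|=1$, extends the ray $pq$ to reduce to $\angle aqb=2\pi/3$, then splits into cases according to whether the arc's center lies above or below $p$; in the nontrivial case it uses the law of cosines to argue that the worst $q$ sits on the perpendicular bisector of $ab$, and finally parametrizes by $\alpha=\angle apb$, obtaining $|pq|=\frac{2}{\sqrt{3}}\sin\!\bigl(\tfrac{4\pi-3\alpha}{6}\bigr)$ via the law of sines and maximizing with a derivative. Your coordinate setup replaces all of this with one observation: on the bounding arc, $|pq|^2$ is \emph{linear} in $y$, so the extremum is automatically at an endpoint of the $y$-range, namely $q\in\{a,b\}$ or $q=(0,-1/\sqrt{3})$. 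This removes the case split, the law-of-sines step, and the calculus, and the final comparison collapses to $(h-\sqrt{3})^2\geqslant 0$. The paper's proof is more synthetically geometric; yours is shorter and entirely algebraic once the arc is identified. Both pinpoint the same extremal configuration ($h=\sqrt{3}$, i.e.\ $\angle apb=\pi/3$, with $q$ at the bottom of the arc).
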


\begin{proof}
	After a suitable scaling, rotation, and reflection assume that $|pa|=1$, $ab$ is horizontal, and $p$ lies below $ab$ as in Figure~\ref{diameter-fig}-left. Since $|pa|=1$ in this new setting, it suffices to prove that $|pq|\leqslant 2/\sqrt{3}$. Consider the ray emanating from $p$ and passing through $q$. Let $q'$ be the point on this ray such that $\angle aq'b=2\pi/3$, and observe that $|pq'|\geqslant |pq|$. Thus for the purpose of this lemma we can assume that $\angle aqb = 2\pi/3$. 
	The locus of all points $q$, with $\angle aqb = 2\pi/3$, is a circular arc $C$ with endpoints $a$ and $b$. See Figure~\ref{diameter-fig}-middle. Let $c$ be the center of the circle that defines arc $C$. Since $ab$ is horizontal and $|pa|=|pb|$, the center $c$ lies on the vertical line through $p$.   
	Let $d$ be the disk of radius $1$ centered at $p$. If $c$ lies on or below $p$ then $C$ lies in $d$ and consequently $q$ is in $d$. In this case $|pq|\leqslant 1$, and we are done. Assume that $c$ lies above $p$ as in Figure~\ref{diameter-fig}-middle. By the law of cosines we have $|pq|=\sqrt{|pc|^2+|cq|^2-2|pc||cq|\cos\beta}$ where $\beta$ is the angle between segments $cp$ and $cq$. Since $|pc|$ and $|cq|$ are fixed for all points $q$ on $C$, the largest value of $|pq|$ is attained at $\beta=\pi$. Again for the purpose of this lemma we can assume that $\beta=\pi$, in which case $|qa|=|qb|$. Let $\alpha$ denote the angle between segments $pa$ and $pb$. Define $f(\alpha)=|pq|$ where $0\leqslant \alpha\leqslant \pi$. Recall that $\angle aqb=2\pi/3$. This setting is depicted in Figure~\ref{diameter-fig}-right. By the law of sines we have
	\begin{linenomath*}
		$$f(\alpha)=|pq|=\frac{\sin\left(\frac{\pi}{6}+\frac{\pi-\alpha}{2}\right)}{\sin\left(\frac{\pi}{3}\right)}
		=\frac{2\sin\left(\frac{4\pi-3\alpha}{6}\right)}{\sqrt{3}},$$\end{linenomath*}
	where $0\leqslant \alpha\leqslant \pi$. By computing the derivative of $f(\alpha)$ it turns out that its largest value is attained at $\alpha=\pi/3$, and it is $2/\sqrt{3}$.
\end{proof}

\begin{theorem}
	\label{stretch-lemma}
	Let $D$ be a set of pairwise intersecting disks. Let $D'$ be the set of disks obtained by increasing the radii of all disks in $D$ by factor $2/\sqrt{3}$ while preserving their centers. Then all disks in $D'$ have a common intersection. The factor $2/\sqrt{3}$ is tight.  
\end{theorem}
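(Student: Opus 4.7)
The theorem has two parts---that the factor $2/\sqrt{3}$ suffices, and that it cannot be improved---and I would handle them separately. For the existence direction, the natural strategy is Helly-then-minimax. Since each enlarged disk is convex, Helly's theorem (Theorem~\ref{Helly-thr}) reduces the problem to any three pairwise intersecting disks $D_1, D_2, D_3$ with centers $c_i$ and radii $r_i$: it suffices to find a single point $q$ with $|qc_i| \leqslant \tfrac{2}{\sqrt{3}}\, r_i$ for $i=1,2,3$. To construct such a point, I would take $q^{\ast}$ to be a minimizer of $\rho(q) = \max_i |qc_i|/r_i$, let $\rho = \rho(q^{\ast})$, and aim to show $\rho \leqslant 2/\sqrt{3}$. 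If $\rho \leqslant 1$, the original disks already share $q^{\ast}$, so I may assume $\rho > 1$.

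The heart of the argument is a case analysis at $q^{\ast}$. A standard perturbation argument shows that at least two of the three ratios $|q^{\ast} c_i|/r_i$ must be tight (equal to $\rho$) at the optimum. If exactly two---say indices $i,j$---are tight, then $q^{\ast}$ must lie strictly between $c_i$ and $c_j$ on the segment $c_i c_j$: otherwise a suitable perpendicular perturbation would decrease both $|q^{\ast} c_i|$ and $|q^{\ast} c_j|$ simultaneously. But then $|c_i c_j| = \rho(r_i + r_j) > r_i + r_j$, contradicting $D_i \cap D_j \neq \emptyset$. Hence all three ratios are tight. A further perturbation argument places $q^{\ast}$ inside $\mathrm{conv}(c_1,c_2,c_3)$: if not, the three vectors $c_i - q^{\ast}$ would all lie in an open half-plane, allowing one to decrease all three ratios at once. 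Consequently the three angles $\angle c_i q^{\ast} c_j$ sum to $2\pi$, and at least one---say $\angle c_1 q^{\ast} c_2$---is $\geqslant 2\pi/3$.

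To finish the existence direction, I would apply the law of cosines in $\triangle c_1 q^{\ast} c_2$: with $|q^{\ast} c_1| = \rho r_1$, $|q^{\ast} c_2| = \rho r_2$, and $\cos(2\pi/3) = -1/2$, this yields $|c_1 c_2|^2 \geqslant \rho^2(r_1^2 + r_1 r_2 + r_2^2)$. Combined with $|c_1 c_2| \leqslant r_1 + r_2$ (pairwise intersection) and the elementary inequality $3(r_1+r_2)^2 \leqslant 4(r_1^2 + r_1 r_2 + r_2^2)$, which rearranges to $(r_1-r_2)^2 \geqslant 0$, this gives $\rho^2 \leqslant 4/3$ and hence $\rho \leqslant 2/\sqrt{3}$. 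Lemma~\ref{diameter-lemma}, which repackages essentially the same $2/\sqrt{3}$ bound as a convex-quadrilateral statement, can be substituted here for a more geometric finish. For tightness, I would exhibit three unit disks whose centers form an equilateral triangle of side length $2$: each pair is tangent, so pairwise intersection holds, yet the only point equidistant from the three centers is the centroid, at distance exactly $2/\sqrt{3}$ from each vertex. Thus no smaller enlargement factor produces a common intersection point.

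The main obstacle is the minimax case analysis at $q^{\ast}$: rigorously justifying the perturbation arguments that force first two and then all three of the ratios to be tight, and that ensure $q^{\ast}$ lies in the convex hull of the centers. Once these are in hand, the rest is just Helly's theorem, the law of cosines, and elementary algebra.
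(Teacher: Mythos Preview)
Your argument is correct and genuinely different from the paper's. Both proofs open with Helly's theorem to reduce to three disks and close with the same equilateral tangent-triple for tightness, but the choice of common point diverges. The paper picks the Fermat point of the triangle $uvw$ formed by the \emph{inner boundary intersections} of the three original disks, and then invokes Lemma~\ref{diameter-lemma} (the convex-quadrilateral bound) three times to place that Fermat point inside each enlarged disk. You instead take the weighted $1$-center of the \emph{disk centers}, i.e.\ the minimizer of $\max_i |qc_i|/r_i$, and extract the bound via optimality conditions, the law of cosines, and the one-line inequality $(r_1+r_2)^2\le \tfrac{4}{3}(r_1^2+r_1r_2+r_2^2)$. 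Your route is more self-contained---it bypasses Lemma~\ref{diameter-lemma} entirely and makes the equality case $r_1=r_2$ transparent---at the cost of the perturbation/KKT bookkeeping you flag as the main obstacle (which is indeed standard but must be written carefully, especially the step forcing $q^\ast$ onto the segment $c_ic_j$ when only two constraints are tight; ``perpendicular perturbation'' covers only the off-line case, so also mention the along-line move when $q^\ast$ lies on the line but outside the segment). The paper's route is more constructive and pictorial, but leans on the separately proved Lemma~\ref{diameter-lemma}.
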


\begin{proof}
	It suffices to show that any three disks in $D'$ have a common intersection because afterwards Theorem~\ref{Helly-thr} implies that all disks in $D'$ have a common intersection. Consider any three disks $d'_1$, $d'_2$, $d'_3$ in $D'$ that are centered at $c_1$, $c_2$, $c_3$, and let $d_1$, $d_2$, $d_3$ be their corresponding disks in $D$. If $d_1$, $d_2$, $d_3$ have a common intersection, so do $d'_1$, $d'_2$, and $d'_3$. Assume that $d_1$, $d_2$, $d_3$ do not have a common intersection, as depicted in Figure~\ref{stretch-fig}. Let $u$ be the innermost intersection point of boundaries of $d_1$ and $d_2$, $v$ be the innermost intersection point of boundaries of $d_2$ and $d_3$, and $w$ be the innermost intersection point of boundaries of $d_3$ and $d_1$, as in Figure~\ref{stretch-fig}. We show that the Fermat point of triangle $\bigtriangleup uvw$ lies in all disks $d'_1$, $d'_2$, and $d'_3$. This would imply that these three disks have a common intersection. The Fermat point of a triangle is a point that minimizes the total distance to the three vertices of the triangle. If all angles of the triangle are less than $2\pi/3$ the Fermat point is inside the triangle and makes angle $2\pi/3$ with every two vertices of the triangle. If the triangle has a vertex of angle at least $2\pi/3$  the Fermat point is that vertex. 
	
	\begin{figure}[htb]
		\centering
		\includegraphics[width=.8\columnwidth]{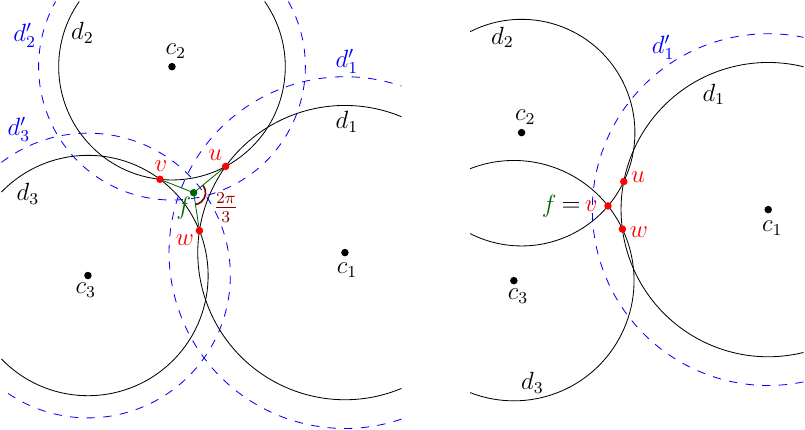}
		\caption{Illustration of the proof of Theorem~\ref{stretch-lemma}}
		\label{stretch-fig}
		\vspace{-12pt}
	\end{figure}
	
	Let $f$ be the Fermat point of $\bigtriangleup uvw$. First assume that all angles of $\bigtriangleup uvw$ are less than $2\pi/3$, as in Figure~\ref{stretch-fig}-left. In this case $f$ is inside $\bigtriangleup uvw$ and $\angle ufw=\angle wfv=\angle vfu=2\pi/3$. By Lemma~\ref{diameter-lemma} we have $|c_1f|\leqslant \frac{2}{\sqrt{3}}|c_1u|$ ($w, c_1,u,f$ play the roles of $a,p,b,q$ in the lemma, respectively). This and the fact that the radius of $d'_1$ is $\frac{2}{\sqrt{3}}|c_1u|$ imply that $f$ lies in $d'_1$. Analogously, we can show that $f$ lies in $d'_2$ and $d'_3$. This finishes our proof for this case.
	
	Now assume that one of the angles of $\bigtriangleup uvw$, say the angle $\angle uvw$ at $v$, is at least $2\pi/3$; see Figure~\ref{stretch-fig}-right. In this case $f=v$. Since $f$ is on the boundaries of $d_2$ and $d_3$, it lies in $d'_2$ and $d'_3$. By Lemma~\ref{diameter-lemma} we have $|c_1f|\leqslant \frac{2}{\sqrt{3}}|c_1u|$. Similarly to the previous case, this implies that $f$ lies in $d'_1$. This finishes our proof. 
	
	The factor $2/\sqrt{3}$ in the theorem is tight in the sense that if we replace it by any smaller constant then the disks in $D'$ may not have a common intersection. To verify this consider three disks of the same radius that pairwise touch (but do not properly intersect). For example assume that $d_1$, $d_2$, $d_3$ in Figure~\ref{stretch-fig}-left have radius $1$ and pairwise touch at $u$, $v$, and $w$. In this case $d'_1$, $d'_2$, $d'_3$ have radius $2/\sqrt{3}$. Moreover $\angle wc_1u=\allowbreak \angle uc_2v=\allowbreak\angle vc_3w=\pi/3$ and $f$ is inside $\bigtriangleup uvw$. In this setting $|c_1f|=|c_2f|=|c_3f|=2/\sqrt{3}$. This implies that $f$ is the only point in the common intersection of $d'_1$, $d'_2$ and $d'_3$. Therefore, if the radii of these disks are less than $2/\sqrt{3}$ then they wouldn't have a common intersection.
\end{proof}

\begin{theorem}
	\label{local2Euclidean-thr}
	Every 2-local Euclidean maximum matching is a $\sqrt{3/7}$ approximation of a global Euclidean maximum matching.
\end{theorem}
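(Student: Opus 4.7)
The plan is to mimic the proof of Theorem~\ref{local3-thr}: choose a single point $c$, form the star $S$ joining $c$ to the $2n$ endpoints of $M$, and sandwich $w(S)$ between $\sqrt{7/3}\cdot w(M)$ above and $w(M^*)$ below. The obstacle flagged in the introduction is that the diametral disks $D$ of $M$ are only pairwise intersecting and may fail to have a common point, so the direct analog of Theorem~\ref{Bereg-thr} is unavailable. I would remedy this by inflating the disks via Theorem~\ref{stretch-lemma}.

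First I would establish the pairwise intersection of $D$. For any two edges $ab,cd\in M$ the 2-local maximum property gives
\[
|ab|+|cd|\;\geqslant\;|ac|+|bd|\qquad\text{and}\qquad |ab|+|cd|\;\geqslant\;|ad|+|bc|,
\]
which is precisely the four-point instance of the property studied by Bereg \etal\ and implies that the diametral disks of $ab$ and $cd$ intersect. Hence $D$ is pairwise intersecting. Theorem~\ref{stretch-lemma} then guarantees that the family $D'$ obtained by dilating each disk about its center by the factor $2/\sqrt3$ has a common point; pick $c$ in this common intersection.

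For each edge $ab\in M$ the corresponding disk of $D'$ has center the midpoint of $ab$ and radius $\frac{2}{\sqrt3}\cdot\frac{|ab|}{2}$, so the distance from $c$ to this midpoint is at most $\frac{|ab|}{\sqrt3}$. Applying Lemma~\ref{endpoint-lemma} with $r=2/\sqrt3$ yields
\[
|ca|+|cb|\;\leqslant\;\sqrt{r^2+1}\cdot|ab|\;=\;\sqrt{\tfrac{7}{3}}\cdot|ab|.
\]
Summing over $ab\in M$ (each matching edge accounting for two distinct star edges) gives $w(S)\leqslant\sqrt{7/3}\cdot w(M)$. Applying the triangle inequality at $c$ to every edge $a^*b^*\in M^*$ and summing over $M^*$ gives $w(M^*)\leqslant w(S)$. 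Chaining the two bounds produces $w(M)\geqslant\sqrt{3/7}\cdot w(M^*)$, as required.

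The main obstacle is the first step: recording that 2-local maximality alone forces pairwise intersection of diametral disks (the four-point analog of Theorem~\ref{Bereg-thr}), and having the precise inflation factor $2/\sqrt3$ available so that the identity $\sqrt{(2/\sqrt3)^2+1}=\sqrt{7/3}$ hands back exactly the advertised ratio. Once those two ingredients are in place, the argument is a mechanical parallel of the proof of Theorem~\ref{local3-thr}, with Theorem~\ref{stretch-lemma} and Lemma~\ref{endpoint-lemma} absorbing all of the geometric work.
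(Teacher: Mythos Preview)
Your proposal is correct and follows essentially the same route as the paper's own proof: establish pairwise intersection of the diametral disks via the four-point case (the paper cites \cite{Bereg2019} for this), inflate by $2/\sqrt{3}$ and invoke Theorem~\ref{stretch-lemma} to obtain a common point $c$, then apply Lemma~\ref{endpoint-lemma} with $r=2/\sqrt{3}$ and the triangle inequality exactly as in Theorem~\ref{local3-thr}. The paper's write-up is nearly identical to yours, down to the computation $\sqrt{(2/\sqrt{3})^2+1}=\sqrt{7/3}$.
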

\begin{proof}
	Our proof approach is somewhat similar to that of Theorem~\ref{local3-thr}.
	Consider any 2-local maximum matching $M$. Let $M^*$ be a global maximum matching for the endpoints of edges of $M$. It is well known that that the two diametral disks introduced by the two edges of any maximum matching, on any set of four points in the plane, intersect each other (see e.g. \cite{Bereg2019}). Consider the set $D$ of diametral disks introduced by edges of $M$. Since $M$ is 2-local maximum, any two disks in $D$ intersect each other. However, all disks in $D$ may not have a common intersection. We increase the radii of all disks in $D$ by factor $2/\sqrt{3}$ while preserving their centers. Let $D'$ be the resulting set of disks. By Theorem~\ref{stretch-lemma} the disks in $D'$ have a common intersection. 
	Let $c$ be a point in this intersection. Let $S$ be the star obtained by connecting $c$ to all endpoints of edges of $M$. Consider any edge $ab\in M$, and let $d$ be its diametral disk in $D$ and $d'$ be the corresponding disk in $D'$. The radius of $d'$ is $\frac{2}{\sqrt{3}}\cdot \frac{|ab|}{2}$. Since $c$ is in $d'$, its distance from the center of $d'$ (which is the midpoint of $ab$) is at most $\frac{2}{\sqrt{3}}\cdot \frac{|ab|}{2}$. By applying Lemma~\ref{endpoint-lemma}, with $p=c$ and $r=2/\sqrt{3}$, we have
	$|ca|+|cb|\leqslant \sqrt{7/3} \cdot |ab|$.
	This implies that $w(S)\leqslant \sqrt{7/3}\cdot w(M)$. 
	For any edge $a^*b^*\in M^*$, by the triangle inequality we have $|a^*b^*|\leqslant |ca^*|+|cb^*|$, and thus $w(M^*)\leqslant w(S)$. Therefore, $w(M)\geqslant \sqrt{3/7}\cdot w(M^*)$.
\end{proof}

\section{Pairwise-crossing matchings are globally maximum}
\label{pairwise-crossing-section}
A pairwise crossing matching is a matching in which every pair of edges cross each other. It is easy to verify that any pairwise crossing matching is 2-local maximum. We claim that such matchings are in fact global maximum. We also claim that pairwise crossing matchings are unique. Both claims can be easily verified for points in convex position. In this section we prove these claims for points in general position, where no three points lie on a line.
\begin{observation}
	\label{unique-obs}
	Let $M$ be a pairwise crossing perfect matching on a point set $P$. Then for any edge $ab\in M$ it holds that the number of points of $P$ on each side of the line through $ab$ is $(|P|-2)/2$.
\end{observation}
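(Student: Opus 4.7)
The plan is to exploit the crossing condition pairwise with a single fixed edge. Write $|P|=2n$, so $M$ has $n$ edges. Fix an edge $ab\in M$ and let $\ell$ be the line through $a$ and $b$. By the general-position assumption, no point of $P\setminus\{a,b\}$ lies on $\ell$, so $\ell$ partitions $P\setminus\{a,b\}$ into two disjoint open half-planes; call the corresponding point sets $L$ and $R$, and note $|L|+|R|=2n-2$. The goal is to show $|L|=|R|=n-1$.

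The key observation is a basic fact about segment crossings: two segments $ab$ and $cd$ properly cross if and only if $c$ and $d$ lie in opposite open half-planes determined by $\ell$, and $a,b$ lie in opposite open half-planes determined by the line through $cd$. In particular, for any edge $cd\in M\setminus\{ab\}$, the fact that $cd$ crosses $ab$ forces exactly one of $c,d$ to belong to $L$ and the other to $R$. Iterating over the $n-1$ edges of $M$ other than $ab$, each contributes exactly one endpoint to $L$ and exactly one to $R$, so $|L|=|R|=n-1=(|P|-2)/2$, which is the desired conclusion.

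There is essentially no obstacle here: the argument is a one-line counting consequence of the crossing criterion, and general position is used only to ensure that no third point of $P$ lies on $\ell$ (otherwise some endpoint could lie on $\ell$ and the partition $L\sqcup R$ would not cover every remaining point). If one wished to phrase it even more symmetrically, one could define a map $\varphi\colon M\setminus\{ab\}\to L\times R$ that sends each edge $cd$ to the pair (endpoint of $cd$ in $L$, endpoint of $cd$ in $R$); since the edges of $M$ partition $P\setminus\{a,b\}$, the first coordinates of $\varphi$ enumerate $L$ and the second coordinates enumerate $R$ bijectively, yielding $|L|=|R|=|M\setminus\{ab\}|=n-1$.
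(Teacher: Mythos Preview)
Your argument is correct and is precisely the natural reasoning the paper has in mind; the observation is stated without proof in the paper because it follows immediately from the fact that each of the other $n-1$ edges, crossing $ab$, must have one endpoint on each side of the line through $a$ and $b$.
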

\begin{theorem}
	\label{unique-crossing-thr}
	A pairwise crossing perfect matching on a point set is unique if it exists.
\end{theorem}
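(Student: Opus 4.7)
The plan is to proceed by induction on $n = |P|$. The base case $n = 2$ is trivial. For the inductive step I would show that, in any pairwise crossing perfect matching $M$ on $P$, the partner of a suitably chosen extreme point is forced; removing that edge then reduces the problem to $n-2$ points. After an infinitesimal rotation of the coordinate system (which preserves both general position and the pairwise-crossing property of every matching) I may assume that $P$ has a unique leftmost point $p$ and that no other point shares its $x$-coordinate. Then every other point of $P$ lies in the open right half-plane through $p$, so the angles $\theta_1 < \theta_2 < \cdots < \theta_{n-1}$ that those points make with $p$ all lie strictly inside $(-\pi/2, \pi/2)$ and, by general position, are pairwise distinct.

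Let $q_1, q_2, \ldots, q_{n-1}$ denote the other points of $P$ listed in this angular order. The first geometric fact I would establish is that, because all $\theta_i$ lie in an open half-turn at $p$, for every index $k$ the line through $p$ and $q_k$ strictly separates $\{q_1, \ldots, q_{k-1}\}$ from $\{q_{k+1}, \ldots, q_{n-1}\}$; that is, angular order about the extreme point $p$ agrees with side-of-line order. Consequently, if $M$ contains the edge $pq_k$, then the line through $pq_k$ has exactly $k-1$ points of $P$ on one side and $n-1-k$ on the other, and Observation~\ref{unique-obs} forces $k - 1 = n - 1 - k$, i.e.\ $k = n/2$. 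Hence in \emph{every} pairwise crossing perfect matching on $P$ the partner of $p$ is the same point $q_{n/2}$, which is determined by $P$ alone.

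Now $M \setminus \{pq_{n/2}\}$ is a pairwise crossing perfect matching on $P \setminus \{p, q_{n/2}\}$ (still in general position), so by the inductive hypothesis it is unique, and therefore $M$ itself is unique. The only genuine obstacle will be cleanly formalizing the equivalence ``angular order about $p$ equals side-of-line order'' for an extreme point $p$; this is exactly what the care in choosing $p$ (and the preliminary rotation) is for. Once that is in hand, Observation~\ref{unique-obs} identifies $p$'s partner as the angular median, and a one-line induction finishes the argument.
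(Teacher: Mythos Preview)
Your argument is correct and, in fact, slightly more informative than the paper's: by showing that the partner of an extreme point is \emph{forced} to be the angular median, you obtain a constructive description of the unique pairwise-crossing matching (peel off extreme points one by one), not merely its uniqueness.

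The paper proceeds differently. It argues by contradiction: assuming two distinct pairwise-crossing matchings $M_1$ and $M_2$, it takes an alternating cycle $C$ in $M_1\cup M_2$, picks the lowest vertex $a$ of $C$, and notes that $a$ has two \emph{different} partners $b_1$ (in $M_1\cap C$) and $b_2$ (in $M_2\cap C$). Observation~\ref{unique-obs}, applied once to $ab_1$ and once to $ab_2$, gives $(|C|-2)/2$ vertices of $C$ strictly left of the line $ab_1$ and $(|C|-2)/2$ strictly right of $ab_2$; together with $\{a,b_1,b_2\}$ this already accounts for $|C|+1$ vertices, a contradiction. So both proofs hinge on the same halving observation at an extreme vertex, but the paper exploits the \emph{incompatibility} of two different partners via a counting overflow, whereas you exploit the \emph{determinacy} of the single admissible partner via induction. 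Your route is shorter and yields an algorithm; the paper's route avoids setting up the angular-order/side-of-line correspondence and works entirely inside one alternating cycle.
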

\begin{proof}
	Consider any even-size point set $P$ that has a pairwise crossing perfect matching. For the sake of contradiction assume that $P$ admits two different perfect matchings $M_1$ and $M_2$ each of which is pairwise crossing. 
	The union of $M_1$ and $M_2$ consists of connected components which are single edges (belong to both $M_1$ and $M_2$) and even cycles. Since $M_1\neq M_2$, $M_1\cup M_2$ contains some even cycles. Consider one such cycle, say $C$. Let $C_1$ and $C_2$ be the sets of edges of $C$ that belong to $M_1$ and $M_2$ respectively. Observe that each of $C_1$ and $C_2$ is a pairwise crossing perfect matching for vertices of $C$. 
	
	\let\qed\relax\end{proof}

\vspace{0pt}
\begin{wrapfigure}{r}{1.43in} 
	\centering
	\vspace{-20pt} 
	\includegraphics[width=1.4in]{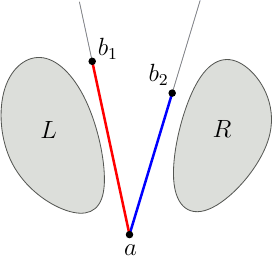} 
	\vspace{-30pt} 
\end{wrapfigure}

\vspace{-11pt}	
Let $a$ denote the lowest vertex of $C$; $a$ is a  vertex of the convex hull of $C$. Let $b_1$ and $b_2$ be the vertices of $C$ that are matched to $a$ via $C_1$ and $C_2$ respectively. After a suitable reflection assume that $b_2$ is to the right side of the line through $a$ and $b_1$ as in the figure to the right. Let $L$ be the set of vertices of $C$ that are to the left side of the line through $ab_1$, and let $R$ be the set of vertices of $C$ that are to the right side of the line through $ab_2$. Since $C_1$ is pairwise crossing, by Observation~\ref{unique-obs} we have $|L|=(|C|-2)/2$. Analogously we have $|R|=(|C|-2)/2$. Set $C'=L\cup R\cup\{a,b_1,b_2\}$, and observe that $C'\subseteq C$. Since the sets $L$, $R$, and $\{a,b_1,b_2\}$ are pairwise disjoint, $|C'|=|L|+|R|+3=|C|+1$. This is a contradiction because $C'$ is a subset of $C$. \hfill $\square$\vspace{8pt}

In Theorem~\ref{pairwise-crossing-thr} we prove that a pairwise crossing matching is globally maximum, i.e., it is a maximum-length matching for its endpoints. The following ``edge-disjoint paths problem'' that is studied by Okamura and Seymour~\cite{Okamura1981} will come in handy for our proof of Theorem~\ref{pairwise-crossing-thr}.
To state this problem in a simple way, we borrow some terminology from \cite{Wagner1995}.

Let $G=(V,E)$ be an embedded planar graph and let $N=\{(a_1,b_1), \dots,\allowbreak (a_k,b_k)\}$ be a set of pairs of distinct vertices of $V$ that lie on the outerface, as in Figure~\ref{crossing-fig}(a). A problem instance is a pair $(G,N)$ where the augmented graph $(V, E\cup\{a_1b_1,\dots,a_kb_k\})$ is Eulerian (i.e. it has a closed trail containing all edges). We note that the augmented graph may not be planar. The problem is to decide whether there are edge-disjoint paths $P_1,\dots,P_k$ in $G$ such that each $P_i$ connects $a_i$ to $b_i$.\footnote{This problem has applications in multicommodity flows in planar graphs \cite{Okamura1981}.} Okamura and Seymour~\cite{Okamura1981} gave a necessary and sufficient condition for the existence of such paths; this condition is stated below in Theorem~\ref{Okamura-Seymour-thr}. A {\em cut} $X$ is a nonempty proper subset of $V$. Let $c(X)$ be the number of edges in $G$ with one endpoint in $X$ and the other in $V\!\setminus\! X$, and let $d(X)$ be the number of pairs $(a_i,b_i)$ with one element in $X$ and the other in $V\!\setminus\! X$.
A cut $X$ is {\em essential} if the subgraphs of $G$ induced by $X$ and $V\!\setminus\! X$ are connected and neither set is disjoint with the outerface of $G$. If $X$ is essential then each of $X$ and $V\! \setminus\! X$ shares one single connected interval with the outerface; see Figure~\ref{crossing-fig}(a). 
\begin{theorem}[Okamura and Seymour, 1981] 
	\label{Okamura-Seymour-thr}
	An instance $(G,N)$ is solvable if and only if for any essential cut $X$ it holds that $c(X)-d(X)\geqslant 0$.
\end{theorem}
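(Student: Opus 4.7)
The necessity is a direct counting argument: if edge-disjoint paths $P_1,\dots,P_k$ realize the demands, then for any essential cut $X$ each of the $d(X)$ demand pairs separated by $X$ is routed along a path that uses at least one of the edges crossing $X$, and since the $P_i$'s are edge-disjoint these uses consume at least $d(X)$ distinct edges among the $c(X)$ edges of $G$ crossing $X$, giving $c(X)\geqslant d(X)$.

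For sufficiency, the plan is induction on $|E(G)|+k$. The Eulerian hypothesis forces $c(X)+d(X)$ to be even for every cut $X$, so the cut condition actually yields $c(X)-d(X)\in\{0,2,4,\dots\}$ on essential cuts. I would split into two cases depending on whether some essential cut is \emph{tight} (slack $0$).

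\textbf{Case 1 (no tight essential cut):} Pick any demand pair $(a_1,b_1)$ and any $a_1$-$b_1$ path $P$ in $G$; delete the edges of $P$ and remove the demand $(a_1,b_1)$. For any essential cut $X$ of the resulting instance, $c(X)$ decreases by the number of edges of $P$ crossing $X$, while $d(X)$ decreases by $1$ or $0$ according to whether $X$ separates $a_1$ from $b_1$. Any path from $a_1$ to $b_1$ crosses $X$ an odd number of times iff $X$ separates them, so the parity of the slack is preserved and the slack itself drops by at most the original value, which was at least $2$ by assumption. Hence the cut condition still holds and induction applies.

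\textbf{Case 2 (some essential cut $X_0$ is tight):} Every demand pair separated by $X_0$ must be routed through exactly one of the $c(X_0)=d(X_0)$ edges crossing $X_0$. Split the problem along $X_0$: in the sub-instance induced on $X_0$, each edge crossing $X_0$ becomes a pendant edge whose outer endpoint, placed on the outerface in the cyclic position dictated by the planar embedding, is paired in the new demand set with its partner's other endpoint. Symmetrically build a sub-instance for $V\setminus X_0$. Both sub-instances are strictly smaller, are again Eulerian, and inherit the cut condition because every essential cut of a sub-instance extends to an essential cut of $G$ with the same slack (using tightness of $X_0$). Solve both recursively and concatenate the routings along the edges of $X_0$.

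The main obstacle is Case~2: the bookkeeping for splitting along $X_0$ is delicate because the newly introduced terminals on the outerface of each sub-instance must appear in a cyclic order that is consistent with a planar routing on the other side, and one must verify that every essential cut of a sub-instance genuinely lifts to one of $G$ without losing slack. Getting this correct, and showing that tightness of $X_0$ propagates cleanly enough for the split to be iterable, is the heart of the original Okamura--Seymour argument; a purely greedy single-path deletion fails precisely because tight cuts can be created by incautious routing choices.
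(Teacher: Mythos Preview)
The paper does not give its own proof of this statement: Theorem~\ref{Okamura-Seymour-thr} is quoted from Okamura and Seymour~\cite{Okamura1981} and used as a black box in the proof of Theorem~\ref{pairwise-crossing-thr}. There is therefore nothing in the paper to compare your attempt against.

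That said, as a sketch of the Okamura--Seymour argument your Case~1 has a genuine gap. Deleting an \emph{entire} $a_1$--$b_1$ path $P$ can reduce $c(X)$ by an arbitrarily large amount: if $P$ crosses an essential cut $X$ (not separating $a_1,b_1$) four times, then $c(X)$ drops by $4$ while $d(X)$ is unchanged, so the slack drops by $4$ and can become negative even though it started at $2$. Your sentence ``the slack itself drops by at most the original value, which was at least $2$'' does not follow from anything you wrote. The actual Okamura--Seymour induction does not route a whole demand at once; it works one outer-boundary edge $e=uv$ at a time (delete $e$ and add $(u,v)$ as a new demand when no tight cut uses $e$, which changes $c(X)$ and $d(X)$ by at most $1$ each), and in the tight case it reroutes a single carefully chosen demand across $e$ rather than splitting the whole instance along $X_0$. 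Your Case~2 split also needs more care: the claim that every essential cut of a sub-instance lifts to one of $G$ with the same slack is not automatic.
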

Wagner and Weihe \cite{Wagner1995} studied a computational version of the problem and presented a linear-time algorithm for finding edge-disjoint paths $P_1,\dots,P_k$. 

\begin{figure}[htb]
	\centering
	\setlength{\tabcolsep}{0in}
	$\begin{tabular}{cc}
		\multicolumn{1}{m{.45\columnwidth}}{\centering\includegraphics[width=.38\columnwidth]{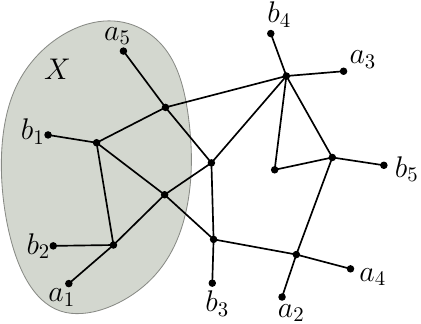}}
		&\multicolumn{1}{m{.55\columnwidth}}{\centering\includegraphics[width=.48\columnwidth]{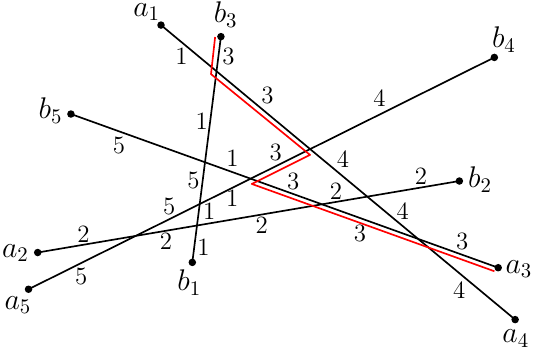}}\\
		(a)&(b)
	\end{tabular}$
	\caption{(a) An essential cut $X$ with $c(X)=4$ and $d(X)=2$. (b) Edge-disjoint paths between endpoints of edges of $M^*$.}
	\label{crossing-fig}
	
\end{figure}

\begin{theorem}
	\label{pairwise-crossing-thr}
	Any pairwise crossing matching is globally maximum.
\end{theorem}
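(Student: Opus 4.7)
The plan is to use Theorem~\ref{Okamura-Seymour-thr} to route any rival matching $M^*$ as edge-disjoint paths inside the arrangement of $M$, and then bound $w(M^*)$ from above by the total edge length of that arrangement, which equals $w(M)$. Let $G=(V,E)$ be the planar embedded graph of the arrangement of $M$: $V$ consists of the $2n$ endpoints of $M$ together with the $\binom{n}{2}$ pairwise crossings of its segments, and $E$ consists of the maximal sub-segments between consecutive vertices along each segment of $M$, so $\sum_{e\in E}|e|=w(M)$. Given an arbitrary perfect matching $M^*$ on the endpoint set $P$ of $M$, I form the instance $(G,N)$ with $N=\{(a^*_i,b^*_i):a^*_ib^*_i\in M^*\}$; once edge-disjoint paths $P_1,\dots,P_n$ exist in $G$ for $N$, the triangle inequality gives $|a^*_ib^*_i|\leqslant\mathrm{len}(P_i)$, and edge-disjointness yields $w(M^*)\leqslant w(M)$, as desired.

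To apply Theorem~\ref{Okamura-Seymour-thr}, I first verify that all endpoints of $M$ lie on the outerface of $G$, which follows from an escape-to-infinity argument along the ray opposite to the unique incident edge at each endpoint. Let $p_1,\dots,p_{2n}$ be the endpoints in the cyclic order in which they appear along the outerface boundary. Using Observation~\ref{unique-obs} together with the pairwise-crossing hypothesis, I show that $M$ pairs $p_i$ with $p_{i+n}$ (indices modulo $2n$): Observation~\ref{unique-obs} forces every chord of $M$ to have exactly $n-1$ outerface vertices on each side, which pins down the antipodal pairing. The Eulerian hypothesis is then straightforward: every crossing vertex has degree $4$ in $G$, every endpoint has degree $1$ in $G$ and gains one incident edge in the augmented graph from its $M^*$-mate (degree $2$), and the graph is connected because any two segments of $M$ share a crossing.

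The core step is the essential cut condition $c(X)\geqslant d(X)$. Any essential cut $X$ separates the outerface vertices into a contiguous arc $p_i,\dots,p_{i+s-1}$ and its complement. A segment $ab\in M$ crosses the corresponding topological separator an odd number of times iff $a$ and $b$ are on opposite sides of the cut, so $c(X)$ is at least the number of antipodal pairs $(p_j,p_{j+n})$ split by the arc. A direct count shows that this number equals $\min(s,2n-s)$, the absolute maximum any perfect matching on an even cyclic sequence can achieve. Since $d(X)\leqslant\min(s,2n-s)$ trivially for the arbitrary matching $M^*$, the condition $c(X)\geqslant d(X)$ holds, and Theorem~\ref{Okamura-Seymour-thr} delivers the required edge-disjoint paths.

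The main obstacle I anticipate is a clean justification of the two geometric facts that underpin the whole reduction: that every endpoint of $M$ lies on the outerface of $G$, and that the cyclic order along the outerface boundary realizes the antipodal pairing $p_i\!\leftrightarrow\!p_{i+n}$. Both are transparent for points in convex position but need care in general position; my plan is to isolate them as preliminary lemmas relying only on Observation~\ref{unique-obs} and the pairwise-crossing hypothesis, so that the Okamura--Seymour reduction can be invoked cleanly without further geometric work.
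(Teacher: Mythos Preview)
Your proposal is correct and follows essentially the same approach as the paper: form the arrangement of $M$, apply the Okamura--Seymour theorem to route any rival matching $M^*$ as edge-disjoint paths, and finish with the triangle inequality. The only difference is cosmetic: you first establish the antipodal pairing $p_i\leftrightarrow p_{i+n}$ along the outerface and then count split pairs as $\min(s,2n-s)$, whereas the paper bypasses the explicit pairing and argues the cut inequality directly by showing (via contradiction with Observation~\ref{unique-obs}) that no two points of the smaller outerface interval $Y_1$ are matched in $M$, hence $c(X)\geqslant |Y_1|\geqslant d(X)$. Both routes rest on the same geometric fact you correctly flag as the delicate point---that the outerface order is compatible with the half-plane count of Observation~\ref{unique-obs}---and the paper's one-line ``This and the fact that $M$ is pairwise crossing imply\dots'' is exactly the step your preliminary lemma would supply.
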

\begin{proof}
	Consider any matching $M$ with pairwise crossing segments, and let $P$ be the set of endpoints of edges of $M$. Let $\arrg$ be the arrangement defined by the segments of $M$. Notice that $w(\arrg)=w(M)$, where $w(\arrg)$ is the total length of segments in $\arrg$. This arrangement is a planar graph where every vertex, that is a point of $P$, has degree 1 and every vertex, that is an intersection point of two segments of $M$, has degree 4 (assuming no three segments intersect at the same point). Now consider any perfect matching $M^*$ on $P$; $M^*$ could be a global maximum matching. Denote the edges of $M^*$ by $a_1b_1, a_2b_2,\dots$. To prove the theorem it suffices to show that $w(M^*)\leqslant w(\arrg)$. To show this inequality, we prove existence of edge-disjoint paths between all pairs $(a_i,b_i)$ in $\arrg$, as depicted in Figure~\ref{crossing-fig}(b). We may assume that $M$ and $M^*$ are edge disjoint because shared edges have the same contribution to each side of the inequality.

	Observe that the pair $(\arrg,M^*)$ is an instance of the problem of Okamura and Seymour~\cite{Okamura1981} because the augmented graph is Eulerian (here we slightly abuse $M^*$ to refer to a set of pairs). In the augmented graph, every point of $P$ has degree 2, whereas the degree of every other vertex is the same as its degree in $\arrg$. Consider any essential cut $X$ in $\arrg$. Set $X_P=X\cap P$. Consider the two sets $X_P$ and $P\!\setminus\!X_P$. Denote the smaller set by $Y_1$ and the larger set by $Y_2$. Notice that $|Y_1\cup Y_2|=|P|$, $|Y_1|\leqslant |P|/2$, and $|Y_2|\geqslant |P|/2$. We claim that no two points of $Y_1$ are matched to each other by an edge of $M$. To verify this claim we use contradiction. Assume that for two points $a$ and $b$ in $Y_1$ we have $ab\in M$. Since $X$ is essential, each of $Y_1$ and $Y_2$ consists of some points of $P$ that are consecutive on the outerface of $\arrg$. This and the fact that $M$ is pairwise crossing imply that all points of $Y_2$ lie on one side of the line through $ab$. This contradicts Observation~\ref{unique-obs}, and hence proves our claim.
	
	The above claim implies that every point in $Y_1$ is matched to a point in $Y_2$ by an edge of $M$. Any such edge of $M$ introduces at least one edge between $X$ and $\arrg\setminus\! X$ in $\arrg$. Therefore $c(X)\geqslant |Y_1|$.
	Since every $a_i$ and every $b_i$ belong to $P$, the number of pairs $(a_i,b_i)$ with one element in $X$ and another one in $\arrg\!\setminus\!X$ is the same as the number of such pairs with one element in $Y_1$ and the other in $Y_2$. The number of such pairs cannot be more than $|Y_1|$, and thus $d(X)\leqslant |Y_1|$. To this end we have that $c(X)\geqslant d(X)$. Having this constraint, Theorem~\ref{Okamura-Seymour-thr} implies that the instance $(\arrg,M^*)$ is solvable, and thus there are edge-disjoint paths between all pairs $(a_i,b_i)$. By the triangle inequality, $w(M^*)$ is at most the total length of these edge-disjoint paths, which is at most $w(\mathcal{A})$.
\end{proof}
\section{Discussion}
We believe that 3-local Euclidean maximum matchings are ``very good'' approximations of global Euclidean maximum matchings. {\color{mycolor}In particular we think that the lower bound on the length ratio should be closer to 0.98 than to $\sqrt{3}/2$.} 
A natural open problem is to use the geometry of the Euclidean plane and improve the lower bounds on the length ratios for 2- and 3-local maximum matchings. 

From the computational point of view, there are algorithms that compute a global maximum matching in polynomial time \cite{Gabow1990,Gabow2018,Gabow1989,Galil1986,Lawler1976} and there is a linear-time algorithm that gives a $ (1-\varepsilon) $-approximation \cite{Duan2014}. It would be interesting to see how fast a $k$-local maximum matching can be computed. Theorem~\ref{k-local-thr} suggests a local search strategy where repeatedly $k$-subsets of the current matching are tested for improvement. In its straightforward version this requires superlinear time. It would be interesting to see whether geometric insights could speed up the local search, maybe not (theoretically) matching the linear-time bound from \cite{Duan2014}, but leading to a practical and in particular simple algorithm. 

\begin{wrapfigure}{r}{1.3in} 
	\centering
	\vspace{-18pt} 
	\includegraphics[width=1.2in]{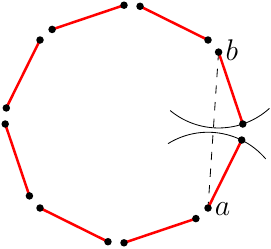} 
	\vspace{-15pt} 
\end{wrapfigure}
We note that analogous ratios for minimum-length matchings could be arbitrary large. In the figure to the right $2n$ points are placed on a circle such that distances between consecutive points are alternating between 1 and arbitrary small constant $\epsilon$. For a sufficiently large $n$, the red matching which has $n$ edges of length 1, would be 2-local minimum (the two arcs in the figure are centered at $a$ and $b$, and show that the length $|ab|$ is larger than the total length of two consecutive red edges). In this setting, the global minimum matching would have $n$ edges of length $\epsilon$. This shows that the ratio of the length of 2-local minimum matchings to that of global minimum matchings could be arbitrary large. By increasing the number of points (and hence flattening the perimeter of the circle) in this example, it can be shown that the length ratio of $k$-local minimum matchings could be arbitrary large, for any fixed $k\geqslant 2$.

{\color{mycolor}
	\section*{Acknowledgment} We thank Damien Robichaud, a former student at Carleton university, for finding and sharing with us the upper bound examples given in Section~\ref{upper-bounds}. These bounds are stronger than what we had initially.}
\bibliographystyle{abbrv}
\bibliography{MaxMatching-LocalGlobal}
\end{document}